\documentclass[12pt]{amsart}
\usepackage[a4paper,top=2.5cm,bottom=2cm,left=1.5cm,right=1.5cm,marginparwidth=1.5cm]{geometry}
\usepackage{amsmath,amssymb,mathtools}
\usepackage{enumitem}
\usepackage{graphicx}
\usepackage{subcaption}
\usepackage{xcolor}
\usepackage{array}
\usepackage{booktabs}
\usepackage{verbatim}
\usepackage{float}
\usepackage{bbm}
\usepackage[format=plain,font=it]{caption}
\usepackage[numbers,sort&compress]{natbib}
\usepackage{hyperref}
\usepackage[nameinlink,capitalize]{cleveref}

\hypersetup{
  colorlinks=true,
  linkcolor=blue!55!black,
  citecolor=red,
  urlcolor=blue!55!black
}

\theoremstyle{plain}
\newtheorem{theorem}{Theorem}

\newtheorem{lemma}{Lemma}
\newtheorem{corollary}{Corollary}

\theoremstyle{definition}
\newtheorem{assumption}{Assumption}
\newtheorem{definition}{Definition}

\theoremstyle{remark}
\newtheorem{remark}{Remark}

\DeclareMathOperator{\alg}{alg}
\DeclareMathOperator{\geo}{geo}
\DeclareMathOperator{\rank}{rank}
\DeclareMathOperator{\adj}{adj}
\DeclareMathOperator{\tr}{tr}

\title[Subwavelength Exceptional Points in Dispersive Resonator Arrays]
{Subwavelength Exceptional Points in Dispersive Resonator Arrays}

\author{Konstantinos Alexopoulos}
\address{MAP5, Université Paris Cité, CNRS, Paris, France}
\email{kalexopoulos@math.cnrs.fr}
\date{}

\begin{document}

\begin{abstract}
Exceptional points provide a mechanism for producing strong spectral sensitivity in non-Hermitian wave systems. For subwavelength resonators with highly dispersive material parameters, however, the resonance problem is not a linear eigenvalue problem in the frequency. In this paper, we use the Lippmann--Schwinger formulation and the associated volume-potential reduction to study exceptional points in dispersive resonator arrays. The reduction leads to a finite-dimensional nonlinear matrix pencil whose entries encode both the local material response and the volume-mediated interaction between the resonators. We characterize exceptional points as defective characteristic values of this pencil and derive explicit conditions for dimers and higher-order systems. We show that passive damping and detuning can generate exceptional points without imposed gain, and we formulate an inverse-design condition for placing such degeneracies at prescribed frequencies. We also prove the corresponding Puiseux splitting laws and identify how material dispersion enters the leading sensitivity prefactor. Numerical experiments, including halide-perovskite-inspired dimers and trimers, illustrate the theory and show that the exceptional-point mechanism persists for explicit highly dispersive passive material laws.
\end{abstract}

\maketitle

\section{Introduction}
\label{sec:introduction}

Subwavelength resonator arrays provide a powerful mechanism for manipulating waves at scales much smaller than the operating wavelength. Their behaviour is governed by the interaction between many small components, and this interaction can often be described by a finite-dimensional matrix obtained from the underlying scattering problem. This point of view has been developed through layer-potential and volume-potential methods for acoustic and electromagnetic resonators \cite{ammari2009layer,ammari2018mathematical,ammari2022wave,ammari2024functional}. It has been used to describe hybridized resonances, closely spaced resonator effects, frequency separation and graded-array behaviour \cite{ammari2020close,ammari2019fully}, as well as PT-symmetric subwavelength structures and exceptional-point sensing \cite{ammari2021high,ammari2022exceptional}.

A central objective in resonator-array design is to create spectral responses which are strong, tunable and sensitive to small changes of the system. Exceptional points offer one way of producing such behaviour. At an exceptional point, resonant frequencies and resonant modes coalesce, and the usual linear perturbation picture breaks down. Near an exceptional point of order $m$, the resonance branches typically split according to an $m$th-root law. This singular response has made exceptional points central objects in non-Hermitian wave physics and photonics \cite{heiss2012physics,miri2019exceptional,el2018non,li2023exceptional}, with important connections to PT-symmetric models \cite{bender1998real}, non-Hermitian spectral theory \cite{moiseyev2011non}, and analytic perturbation theory \cite{kato1966perturbation}.

This mechanism is particularly attractive for sensing. A square-root splitting, for instance, can produce a larger frequency separation than the one obtained near a simple resonance. This idea has been explored in optical microcavities and higher-order photonic systems \cite{wiersig2014enhancing,hodaei2017enhanced,chen2017exceptional}. At the same time, the interpretation of exceptional-point-enhanced sensing is delicate: linewidth, noise, modal non-orthogonality and the measurement procedure all affect the observable signal \cite{lau2018fundamental,wiersig2020review,wiersig2020prospects}. It is therefore useful to have reduced models in which the degeneracy conditions and the splitting constants can be computed directly from the physical parameters. Nanoscale non-Hermitian systems, including plasmonic realizations of exceptional points, give one concrete setting where this question is important \cite{park2020symmetry,li2023exceptional}.

The difficulty addressed in the present paper is that, for highly dispersive resonators, this reduced matrix is not a standard eigenvalue matrix. The material parameters depend on the frequency, sometimes through singular or strongly varying laws, and the resonance equation becomes a nonlinear characteristic-value problem. Halide perovskites are a motivating class of such materials, since they have tunable optical properties and are widely studied in photonic and optoelectronic applications \cite{sutherland2016perovskite,chouhan2020synthesis}. In previous work, the Lippmann--Schwinger formulation was used to derive asymptotic resonance equations for highly dispersive halide-perovskite resonators and for coupled highly dispersive systems \cite{alexopoulos2022asymptotic,alexopoulos2023mathematical}. The graded-array reduction separates the diagonal dispersive response from the off-diagonal volume-mediated interaction and turns this matrix into a design object for frequency-position control \cite{alexopoulos2026graded}. Further work on high-index nanoparticles shows how closely spaced electromagnetic resonators can generate strong near-field enhancement without a singularity \cite{alexopoulos2026non}.

The purpose of this paper is to understand what becomes of exceptional points in this nonlinear dispersive setting. The main issue is not only to solve for coalescing resonances, but to identify where the material dispersion enters the exceptional-point equations and the associated sensitivity law. We define exceptional points as defective characteristic values of the reduced nonlinear matrix pencil. In this formulation, material dispersion affects the frequency derivatives of the characteristic determinant, and hence changes the conditions which impose multiple characteristic values. This gives a way to distinguish the effect of the material values from the effect of the material derivatives, to compute the Puiseux prefactor in terms of the reduced pencil, and to formulate inverse design as a transversality problem. The questions we address are the following. How can one detect exceptional points in a dispersive reduced matrix? How do the nearby resonances split under perturbations? Can passive damping and detuning generate exceptional points without imposed gain? Finally, can the same determinant-based method be used to construct higher-order exceptional points?

In this work, we start in Section~\ref{sec:reduced-problem} by recalling the Helmholtz problem, the Lippmann--Schwinger representation and the finite-dimensional reduction which gives the nonlinear matrix pencil $\mathcal L(\omega,p)$. In Section~\ref{sec:nonlinear-pencils}, we define characteristic values, algebraic and geometric multiplicities, and exceptional points for nonlinear matrix pencils, and we isolate the role of material dispersion through a frozen-material comparison and a local inverse-design theorem. In Section~\ref{sec:dimers}, we apply these definitions to dimers and derive explicit second-order exceptional-point conditions. In Section~\ref{sec:perturbation}, we prove the Puiseux splitting law near an exceptional point and identify the sensitivity prefactor in terms of derivatives of the reduced pencil. In Section~\ref{sec:higher-order}, we extend the reduced theory to higher-order systems and prove a third-order trimer construction. Finally, in Section~\ref{sec:numerics}, we test the reduced theory numerically for dimers and trimers, including passive exceptional points and inverse design. The numerical section ends with halide-perovskite-inspired dimer and full-interaction trimer computations, which show that the same exceptional-point signatures persist for an explicit highly dispersive material law.

\section{The Reduced Dispersive Resonator Problem}
\label{sec:reduced-problem}

Let us begin by introducing the finite-dimensional setting on which the exceptional-point analysis will be carried out. We recall the scattering problem and show how the dispersive material law and the volume-mediated interaction enter the same reduced nonlinear pencil.

\subsection{Helmholtz formulation}

Let us first describe the physical system which will be studied throughout the paper. Let $d\in\{2,3\}$ and consider an array of $N$ small resonators
\begin{align*}
  \Omega_\delta
  =
  \bigcup_{i=1}^N \Omega_i^\delta,
  \qquad
  \Omega_i^\delta = z_i+\delta D_i,
  \qquad 0<\delta\ll 1,
\end{align*}
where the reference domains $D_i\subset\mathbb R^d$ are bounded, connected and have Lipschitz boundaries. The centres $z_i$ and the domains $D_i$ may depend on the index $i$, allowing for geometric grading. We assume throughout that the components are disjoint and remain separated in the rescaled variables:
\begin{align*}
  \operatorname{dist}(D_i,D_j)\geq d_* >0,
  \qquad i\neq j,
\end{align*}
with $d_*$ independent of $\delta$.

The background medium has permittivity $\varepsilon_0$ and permeability $\mu_0$. We denote by $k_0$ the corresponding background wavenumber, namely
\begin{align*}
  k_0 = \omega\sqrt{\varepsilon_0\mu_0}.
\end{align*}
Inside the $i$th resonator, the permittivity is denoted by $\varepsilon_i(\omega)$. We allow $\varepsilon_i$ to be complex-valued and frequency-dependent. Its imaginary part models damping or gain, while its frequency dependence models material dispersion. We write the corresponding contrast factor as
\begin{align*}
  \xi_i(\omega)
  :=
  \mu_0\bigl(\varepsilon_i(\omega)-\varepsilon_0\bigr).
\end{align*}

\begin{remark}
The important point for what follows is that $\xi_i$ is not a constant contrast. It is a frequency-dependent, and in general complex-valued, function. For highly dispersive materials it may also be meromorphic in the frequency. This is the source of the nonlinear characteristic-value problem which appears after the subwavelength reduction.
\end{remark}

For an incident field $u^{\mathrm{in}}$ satisfying
\begin{align*}
  (\Delta+k_0^2)u^{\mathrm{in}}=0
  \qquad\text{in }\mathbb R^d,
\end{align*}
the total field $u$ solves
\begin{align*}
  \begin{cases}
    \Delta u+\omega^2\varepsilon_i(\omega)\mu_0 u=0,
      & x\in \Omega_i^\delta,\quad i=1,\ldots,N,\\
    \Delta u+k_0^2u=0,
      & x\in \mathbb R^d\setminus\overline{\Omega_\delta},\\
    u|_+=u|_-,
      & x\in\partial\Omega_\delta,\\
    \partial_\nu u|_+=\partial_\nu u|_-,
      & x\in\partial\Omega_\delta,\\
    u-u^{\mathrm{in}}\text{ satisfies the outgoing radiation condition.}
  \end{cases}
\end{align*}
The outgoing radiation condition is the Sommerfeld condition
\begin{align*}
  \lim_{r\to\infty}
  r^{(d-1)/2}
  \left(
    \frac{\partial}{\partial r}-ik_0
  \right)
  \bigl(u-u^{\mathrm{in}}\bigr)
  =
  0,
  \qquad r=|x|,
\end{align*}
uniformly in the angular variable. After the change of variables $x=z_i+\delta y$ on each component, the subwavelength regime corresponds to
\begin{align*}
  \delta k_0 \ll 1.
\end{align*}
\subsection{Lippmann--Schwinger formulation}

Let $G(\cdot,k)$ be the outgoing Helmholtz Green function, defined by
\begin{align*}
  (\Delta+k^2)G(\cdot,k)=\delta_0
  \qquad\text{in }\mathbb R^d,
\end{align*}
together with the outgoing radiation condition. With the sign convention used here,
\begin{align*}
  G(x,k)
  =
  \begin{cases}
    -\dfrac{i}{4}H_0^{(1)}(k|x|),
      & d=2,\\[1em]
    -\dfrac{e^{ik|x|}}{4\pi |x|},
      & d=3.
  \end{cases}
\end{align*}
\begin{theorem}[Lippmann--Schwinger representation]
\label{thm:lippmann-schwinger}
The field $u$ solves the rescaled Helmholtz transmission problem if and only if
it satisfies
\begin{align*}
  u(x)-u^{\mathrm{in}}(x)
  =
  -\delta^2\omega^2
  \sum_{i=1}^N
  \xi_i(\omega)
  \int_{D_i}
    G(x-y,\delta k_0)u(y)\,dy,
  \qquad x\in\mathbb R^d.
\end{align*}
In particular, resonant modes are non-trivial solutions of the same equation with $u^{\mathrm{in}}=0$.
\end{theorem}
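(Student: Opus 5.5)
The plan is to show both implications by interpreting the integral operator as convolution with the outgoing Green function. In the rescaled variables $x=z_i+\delta y$, the transmission problem reads $\Delta u+\delta^2 k_0^2 u = -\delta^2\omega^2\sum_i \xi_i(\omega)\mathbbm 1_{D_i} u$ in $\mathbb R^d$ (with $D_i$ now the translated rescaled components). This holds because $\omega^2\varepsilon_i\mu_0-k_0^2=\omega^2\mu_0(\varepsilon_i-\varepsilon_0)=\omega^2\xi_i$. The transmission conditions $u|_+=u|_-$ and $\partial_\nu u|_+=\partial_\nu u|_-$ are exactly what is needed for this equation to hold in the distributional sense across $\partial\Omega_\delta$.

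\emph{Transmission problem $\Rightarrow$ integral equation.} Let $u\in H^1_{\mathrm{loc}}$ solve the transmission problem, and set $f:=-\delta^2\omega^2\sum_i\xi_i\mathbbm 1_{D_i}u$, a compactly supported $L^2$ function.
\begin{enumerate}[label=(\roman*)]
\item Testing against $\varphi\in C_c^\infty(\mathbb R^d)$ and integrating by parts on each side of $\partial D_i$, the boundary terms cancel because of the continuity of traces and conormal derivatives. This gives $(\Delta+\delta^2k_0^2)u=f$ in $\mathcal D'$.
\item Define $v:=u-u^{\mathrm{in}}-\int G(\cdot-y,\delta k_0)f(y)\,dy$. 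Then $(\Delta+\delta^2k_0^2)v=0$ on all of $\mathbb R^d$. Moreover, $v$ is outgoing: $u-u^{\mathrm{in}}$ is outgoing by assumption, and the volume potential of a compactly supported density is outgoing because $G$ is.
\item By elliptic regularity $v$ is an entire solution of the Helmholtz equation that satisfies the Sommerfeld condition. Rellich's lemma (uniqueness of the exterior problem for real $k_0>0$) then gives $v\equiv0$.
\end{enumerate}
The main obstacle is this last step. Rellich's lemma needs $k_0$ real and positive. For complex resonant $\omega$ I would instead read the radiation condition as "is a combination of outgoing Hankel/spherical waves outside a large ball", which makes the argument go through by analytic continuation in $\omega$. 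The statement's last sentence, about resonances, should be understood in that continued sense.

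\emph{Integral equation $\Rightarrow$ transmission problem.} Suppose $u$ (say $u\in L^2(D)$, then extended by the formula) satisfies the integral identity.
\begin{enumerate}[label=(\roman*)]
\item Since $(\Delta+k^2)G=\delta_0$, the volume potential $w=G*f$ satisfies $(\Delta+\delta^2k_0^2)w=f$ in $\mathcal D'$, and $w\in H^2_{\mathrm{loc}}$ because $f\in L^2$ has compact support.
\item Hence $u\in H^2_{\mathrm{loc}}$, and restricting the equation to each $D_i$ and to the exterior recovers the two Helmholtz equations.
\item $H^2_{\mathrm{loc}}$ regularity yields continuity of $u$ and of $\partial_\nu u$ across $\partial D_i$.
\item $u-u^{\mathrm{in}}=w$ is outgoing.
\end{enumerate}
Rescaling back to $x$ gives the original problem, with the factor $\delta^2$ coming from $dy$ and the Laplacian scaling. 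Taking $u^{\mathrm{in}}=0$ in both directions gives the statement about resonant modes.
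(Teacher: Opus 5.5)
Your proposal is correct and follows the same route as the paper's sketch. The paper gets the forward direction by applying the outgoing resolvent of $\Delta+\delta^2k_0^2$ to the contrast source $-\delta^2\omega^2\sum_i\xi_i 1_{D_i}u$, and your Rellich step is exactly what makes that resolvent well defined. It gets the converse by applying $\Delta+\delta^2k_0^2$ to the volume potential, which your $H^2_{\mathrm{loc}}$ argument makes rigorous.

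Your caveat about complex resonant frequencies addresses something the paper does not. Under your Hankel-expansion reading of the radiation condition, no analytic continuation is needed. Mode by mode, an entire solution is a multiple of a (spherical) Bessel function and an outgoing one is a multiple of the corresponding (spherical) Hankel function. These are linearly independent for every $k\neq0$, so $v\equiv0$ follows directly.
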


\begin{proof}
The detailed proof of the statement can be found in \cite{alexopoulos2022asymptotic, alexopoulos2023mathematical}. A sketch of the proof is provided in Appendix~\ref{app:proof-ls}.
\end{proof}

Let us also introduce the componentwise volume-potential operators
\begin{align*}
  \mathcal K_i^{\delta k_0}[\varphi](x)
  :=
  -\int_{D_i}G(x-y,\delta k_0)\varphi(y)\,dy,
  \qquad x\in D_i,
\end{align*}
and, for $i\neq j$,
\begin{align*}
  \mathcal R_{ji}^{\delta k_0}[\varphi](x)
  :=
  -\int_{D_j}G(x-y,\delta k_0)\varphi(y)\,dy,
  \qquad x\in D_i.
\end{align*}
The first operator describes the self-interaction of $D_i$, while the second describes the field induced on $D_i$ by a density supported on $D_j$.

\subsection{Finite-dimensional resonance equation}

We now pass from the integral equation to the reduced matrix problem. The finite-dimensional model is obtained by projecting the volume-potential equation onto the dominant resonant direction of each component. The following statement records the one-mode approximation which is used in the reduction.

\begin{lemma}[Dominant resonant profile]
\label{lem:dominant-profile}
Assume that, for each $i$, the self-interaction operator
$\mathcal K_i^{\delta k_0}$ has a simple isolated eigenpair
\begin{align*}
  \mathcal K_i^{\delta k_0}\phi_i
  =
  \nu_i(\delta,\omega)\phi_i,
  \qquad
  \|\phi_i\|_{L^2(D_i)}=1,
\end{align*}
in the frequency window of interest. In the constant-mode regime used in the graded-array reduction, $\phi_i$ is approximated by
\begin{align*}
  \widehat 1_{D_i}
  :=
  |D_i|^{-1/2}1_{D_i}.
\end{align*}
More precisely, in the two-dimensional asymptotic regime,
\begin{align*}
  \phi_i
  =
  \widehat 1_{D_i}
  +
  O\bigl(|\log\delta|^{-1}\bigr),
  \qquad i=1,\ldots,N.
\end{align*}
\end{lemma}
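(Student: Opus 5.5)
The plan is to show that, in two dimensions, the operator $\mathcal K_i^{\delta k_0}$ is a rank-one operator of size $|\log\delta|$ plus a remainder of order one. The claimed estimate on $\phi_i$ will then follow from perturbation theory for an isolated simple eigenvalue.

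\emph{Kernel expansion.} For $d=2$ and small argument,
\begin{align*}
  -\frac{i}{4}H_0^{(1)}(\delta k_0|x-y|)
  =
  \frac{1}{2\pi}\log\delta
  +\frac{1}{2\pi}\log|x-y|
  +c(k_0)
  +O\bigl(\delta^2|\log\delta|\bigr),
\end{align*}
where $c(k_0)$ is an explicit constant involving Euler's constant and $\log(k_0/2)$. The remainder is uniform for $x,y$ in the bounded set $D_i$. This gives the splitting
\begin{align*}
  \mathcal K_i^{\delta k_0}
  =
  -\frac{|D_i|\log\delta}{2\pi}\,P_i + \mathcal B_i^{\delta},
  \qquad
  P_i\varphi := \langle \varphi,\widehat 1_{D_i}\rangle\,\widehat 1_{D_i}.
\end{align*}
Here $\mathcal B_i^{\delta}$ collects three pieces: the logarithmic volume potential on $D_i$, the constant $c(k_0)$ times $|D_i|P_i$, and the remainder. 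Each piece is bounded on $L^2(D_i)$, uniformly in $\delta$ and for $\omega$ in the frequency window. For the first piece this uses that the kernel $\log|x-y|$ is square integrable on $D_i\times D_i$, so the operator is Hilbert--Schmidt.

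\emph{Perturbation.} After dividing by $\lambda_\delta := -|D_i|\log\delta/(2\pi)$, we get
\begin{align*}
  \lambda_\delta^{-1}\mathcal K_i^{\delta k_0} = P_i + E_\delta,
  \qquad
  \|E_\delta\| = O\bigl(|\log\delta|^{-1}\bigr).
\end{align*}
The orthogonal projection $P_i$ has the simple eigenvalue $1$ with eigenvector $\widehat 1_{D_i}$. Its remaining spectrum is $\{0\}$, so there is a spectral gap of $1$.

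Take the Riesz projection $\Pi_\delta$ over the circle $|z-1|=1/2$. For small $\delta$ the resolvent identity gives $\|\Pi_\delta - P_i\| = O(|\log\delta|^{-1})$. Hence $\Pi_\delta$ has rank one, and $\lambda_\delta^{-1}\mathcal K_i^{\delta k_0}$ has a unique eigenvalue $1+O(|\log\delta|^{-1})$ inside the circle. I identify this with $\nu_i/\lambda_\delta$, interpreting the simple isolated eigenpair of the hypothesis as the dominant one; this is the branch tracked in the graded-array reduction.

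\emph{Normalization.} The vector $\Pi_\delta\widehat 1_{D_i}$ equals $\widehat 1_{D_i}+O(|\log\delta|^{-1})$. Normalizing it, and fixing the phase so that $\langle \phi_i,\widehat 1_{D_i}\rangle>0$, gives $\phi_i=\widehat 1_{D_i}+O(|\log\delta|^{-1})$ in $L^2(D_i)$. As a by-product, $\nu_i = -\frac{|D_i|}{2\pi}\log\delta + O(1)$.

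\emph{Main obstacle.} The delicate step is uniformity. The bound on $\mathcal B_i^\delta$ must hold uniformly in $\omega$ over the frequency window, including complex $\omega$ near the real axis, so that the Hankel-function remainder stays controlled. This holds because $\delta k_0|x-y|\le \delta k_0\operatorname{diam}D_i\to0$ and the series for $H_0^{(1)}$ converges locally uniformly on compact sets.

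The second point needing care is the eigenpair in the hypothesis. I must rule out that it is a non-dominant $O(1)$ eigenvalue. The statement implicitly means the eigenpair isolated by the large logarithmic scale, and I will make that identification explicit in the proof.
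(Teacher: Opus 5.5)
Your proposal is correct and follows the same idea as the paper's sketch: the leading $\frac{1}{2\pi}\log\delta$ part of the two-dimensional kernel is spatially constant, so the dominant eigenfunction of $\mathcal K_i^{\delta k_0}$ is the normalized constant $\widehat 1_{D_i}$. The paper defers the quantitative $O(|\log\delta|^{-1})$ estimate to the cited works, whereas you prove it yourself with the rank-one-plus-bounded splitting and a Riesz-projection argument; that argument only needs the window $W$ to stay away from $\omega=0$, so that $c(k_0)$ remains bounded.
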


\begin{proof}
The detailed proof of the statement can be found in \cite{alexopoulos2022asymptotic, alexopoulos2023mathematical}. A sketch of the proof is provided in Appendix~\ref{app:proof-dominant-profile}.
\end{proof}

The resonant mode can therefore be decomposed, at leading order, as
\begin{align*}
  u|_{D_i}\approx q_i\phi_i,
  \qquad i=1,\ldots,N,
\end{align*}
where the coefficients $q_i$ describe the contribution of the $i$th resonator to the collective mode.

\begin{lemma}[Projected resonance system]
\label{lem:projected-system}
Projecting the homogeneous Lippmann--Schwinger equation onto the dominant
profiles $\phi_1,\ldots,\phi_N$ gives, at leading order, the nonlinear
finite-dimensional system
\begin{align*}
  \mathcal L(\omega,p) q = 0,
\end{align*}
where $q=(q_1,\ldots,q_N)^\top\in\mathbb C^N$ and $p$ denotes the tunable parameters of the array. In the constant-mode approximation used for graded arrays, the reduced matrix has the schematic form
\begin{align*}
  \mathcal L(\omega,p)
  =
  I-\operatorname{diag}\bigl(B_1(\omega,p),\ldots,B_N(\omega,p)\bigr)C(\omega,p),
\end{align*}
where the scalar functions $B_i$ encode the local dispersive response and the matrix $C$ contains the projected volume-mediated coupling.
\end{lemma}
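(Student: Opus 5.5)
The plan is to start from the homogeneous Lippmann--Schwinger equation of Theorem~\ref{thm:lippmann-schwinger} with $u^{\mathrm{in}}=0$ and restrict it to each $D_i$. In terms of the operators $\mathcal K_i^{\delta k_0}$ and $\mathcal R_{ji}^{\delta k_0}$, the restriction to $D_i$ reads
\begin{align*}
  u|_{D_i}
  =
  \delta^2\omega^2\xi_i(\omega)\,\mathcal K_i^{\delta k_0}[u|_{D_i}]
  +
  \delta^2\omega^2\sum_{j\neq i}\xi_j(\omega)\,\mathcal R_{ji}^{\delta k_0}[u|_{D_j}],
  \qquad i=1,\ldots,N.
\end{align*}
The overall sign comes from the minus sign in the definitions of the volume potentials. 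Next I would insert the one-mode ansatz $u|_{D_i}=q_i\phi_i+r_i$, with the remainder $r_i$ orthogonal to $\phi_i$ in $L^2(D_i)$. Lemma~\ref{lem:dominant-profile} controls this remainder: it is of size $O(|\log\delta|^{-1})$ in two dimensions, and is small in the constant-mode regime generally.

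Then I would take the $L^2(D_i)$ inner product with $\phi_i$. Since $\mathcal K_i^{\delta k_0}\phi_i=\nu_i\phi_i$, the self-interaction term contributes $\delta^2\omega^2\xi_i\nu_i q_i$. The coupling term contributes $\delta^2\omega^2\sum_{j\ne i}\xi_j R_{ij}q_j$, where
\begin{align*}
  R_{ij}:=\langle \mathcal R_{ji}^{\delta k_0}\phi_j,\phi_i\rangle .
\end{align*}
Dropping the remainder contributions at leading order gives
\begin{align*}
  \bigl(1-\delta^2\omega^2\xi_i\nu_i\bigr)q_i-\delta^2\omega^2\sum_{j\ne i}\xi_jR_{ij}q_j=0 .
\end{align*}
To reach the stated form I would change variables to $\tilde q_i:=\delta^2\omega^2\xi_i q_i$, or equivalently divide row $i$ by $1-\delta^2\omega^2\xi_i\nu_i$. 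I then set
\begin{align*}
  B_i:=\frac{\delta^2\omega^2\xi_i}{1-\delta^2\omega^2\xi_i\nu_i},
  \qquad C_{ij}:=R_{ij}\ (i\neq j),\qquad C_{ii}:=0 .
\end{align*}
This yields $(I-\operatorname{diag}(B_i)C)q=0$ up to the rescaling. An alternative is to keep $C_{ii}=\nu_i$ and $B_i=\delta^2\omega^2\xi_i$ directly. Both choices give the schematic form of the lemma, with $B_i$ carrying $\xi_i(\omega)$ (the local dispersive response) and $C$ carrying the volume-mediated interaction. In the constant-mode approximation, $\phi_i$ is replaced by $\widehat 1_{D_i}$, so the coupling becomes explicit:
\begin{align*}
  C_{ij}\approx-|D_i|^{-1/2}|D_j|^{-1/2}\int_{D_i}\int_{D_j}G(x-y,\delta k_0)\,dy\,dx ,
\end{align*}
and the diagonal self-interaction entry is $\nu_i\approx\langle\mathcal K_i\widehat 1_{D_i},\widehat 1_{D_i}\rangle$. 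The variable $p$ collects the geometric parameters $z_i$, $D_i$ and the material parameters entering $\xi_i$.

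The main obstacle is justifying the word ``leading order''. The remainder $r_i$ satisfies a coupled equation of its own, and one must show that its feedback into the projected equations is small relative to the retained terms. I would handle this with a Schur-complement (Lyapunov--Schmidt) argument. Write $P_i$ for the orthogonal projection onto $\phi_i$. On $\operatorname{range}(1-P_i)$, the operator $1-\delta^2\omega^2\xi_i\mathcal K_i$ is invertible because $\nu_i$ is a simple isolated eigenvalue, which is exactly the hypothesis of Lemma~\ref{lem:dominant-profile}. Solving for $r_i$ in terms of $q$ therefore gives corrections to $\mathcal L$ of relative order $O(|\log\delta|^{-1})$, and the off-diagonal operators $\mathcal R_{ji}$ are bounded uniformly thanks to the separation $d_*$. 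The separation also keeps the off-diagonal kernels smooth. Since the statement is schematic and at leading order, I would only record that these corrections are uniformly small and analytic in $\omega$ in the frequency window. Poles of $\xi_i$ can be excluded from the window, or handled by multiplying rows by analytic factors.
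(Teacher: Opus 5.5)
Your proposal is correct and follows essentially the same route as the paper's sketch. You restrict the homogeneous Lippmann--Schwinger equation to each $D_i$, split it into the self-interaction blocks $\mathcal K_i^{\delta k_0}$ and cross blocks $\mathcal R_{ji}^{\delta k_0}$, and project onto $\phi_i$, obtaining exactly the paper's diagonal factors $1-\delta^2\omega^2\xi_i\nu_i$ and off-diagonal entries $-\delta^2\omega^2\xi_j\langle\mathcal R_{ji}^{\delta k_0}\phi_j,\phi_i\rangle$. Your Lyapunov--Schmidt elimination of $r_i$ plays the role of the paper's pole-pencil decomposition; since $\mathcal K_i^{\delta k_0}$ is not self-adjoint, that step is cleaner with the Riesz projection than with the orthogonal one.

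One caution concerns your two normalizations, and you should prefer the one with $B_i=\delta^2\omega^2\xi_i$ and $C_{ii}=\nu_i$. By Sylvester's identity it has the same determinant as the projected matrix. Dividing row $i$ by $1-\delta^2\omega^2\xi_i\nu_i$ instead divides the determinant by $\prod_i(1-\delta^2\omega^2\xi_i\nu_i)$ and puts poles at the isolated-resonator frequencies, which conflicts with the holomorphy and determinant-based multiplicities the paper uses later.
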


\begin{proof}
The detailed proof of the statement can be found in \cite{alexopoulos2026graded, alexopoulos2022asymptotic, alexopoulos2023mathematical}. A sketch of the proof is provided in Appendix~\ref{app:proof-projected-system}.
\end{proof}

We now collect the hypotheses under which the reduced matrix problem is used as the leading-order description of the original scattering problem. The assumption below isolates the one-mode regime and the uniform spectral separation needed for the projection argument to remain stable throughout the frequency window.

\begin{assumption}[Subwavelength one-mode reduction]
\label{ass:one-mode-reduction}
In the frequency window $W\subset\mathbb C$ under consideration, we assume the
following.
\begin{enumerate}[label=(\roman*)]
  \item The number of resonators $N$ is fixed. There exist constants
  $d_*>0$ and $\delta_*>0$ such that
  \begin{align*}
    \operatorname{dist}(D_i,D_j)\geq d_*,
    \qquad i\neq j,
    \qquad 0<\delta<\delta_*.
  \end{align*}
  Moreover, if $\mathcal P_i(p)$ denotes the set of poles of
  $\xi_i(\cdot,p)$ which are not part of the resonant denominators under
  study, then, there exists $\rho_*>0$ such that
  \begin{align*}
    \operatorname{dist}\bigl(W,\mathcal P_i(p)\bigr)\geq \rho_*,
    \qquad i=1,\ldots,N,
  \end{align*}
  uniformly for the admissible parameter values $p$.
  \item For each component $D_i$, the self-interaction operator
  $\mathcal K_i^{\delta k_0}$ has a simple isolated eigenvalue
  $\nu_i(\delta,\omega)$ with normalized eigenfunction $\phi_i$. If
  $\sigma_i^{\mathrm{rem}}(\delta,\omega)$ denotes the rest of the spectrum
  of $\mathcal K_i^{\delta k_0}$, then, there exist constants $c_*>0$ and
  $\delta_*>0$ such that
  \begin{align*}
    \operatorname{dist}
    \bigl(
      \nu_i(\delta,\omega),
      \sigma_i^{\mathrm{rem}}(\delta,\omega)
    \bigr)
    \geq c_*,
  \end{align*}
  for all $i=1,\ldots,N$, all $\omega\in W$, and all
  $0<\delta<\delta_*$.
  \item The resonant field admits a one-mode decomposition
  \begin{align*}
    u|_{D_i}
    =
    q_i\phi_i+r_i,
    \qquad
    \|r_i\|_{L^2(D_i)}
    \leq
    e_\delta \|q\|,
  \end{align*}
  where $e_\delta\to0$ as $\delta\to0$, uniformly for $\omega\in W$. In the
  two-dimensional constant-profile regime used in the graded-array reduction,
  one may take the leading profile to be
  $\widehat 1_{D_i}=|D_i|^{-1/2}1_{D_i}$, with profile error
  $O(|\log\delta|^{-1})$, together with the many-particle reduction error.
\end{enumerate}
\end{assumption}

For fixed parameters $p$, we denote the reduced characteristic determinant by
\begin{align*}
  F(\omega,p):=\det\mathcal L(\omega,p).
\end{align*}

\begin{theorem}[Reduced resonance equation]
\label{thm:reduced-resonance-equation}
Let us assume that Assumption~\ref{ass:one-mode-reduction} holds. Then, the
full resonant equation in the window $W$ can be written, after projection onto
the dominant component modes, as
\begin{align*}
  \mathcal L(\omega,p)q+\mathcal E_\delta(\omega,p)q=0,
  \qquad
  \|\mathcal E_\delta(\omega,p)\|\leq C e_\delta,
\end{align*}
uniformly for $\omega\in W$ and for admissible parameter values $p$. Hence the leading-order resonances of the full system are given by the characteristic values of the nonlinear matrix problem
\begin{align*}
  \mathcal L(\omega,p)q=0.
\end{align*}
Moreover, each simple zero of $F(\cdot,p)$ persists as a unique nearby resonance of the full projected problem, counted with multiplicity.
\end{theorem}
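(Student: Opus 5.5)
The plan has two parts: a Lyapunov--Schmidt-type projection that produces the perturbed finite-dimensional equation, followed by a Rouché argument on the determinant.

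\emph{Step 1: the projected equation.} Insert the decomposition of Assumption~\ref{ass:one-mode-reduction}(iii), $u|_{D_i}=q_i\phi_i+r_i$, into the homogeneous equation of Theorem~\ref{thm:lippmann-schwinger}. Write the equation componentwise using $\mathcal K_i^{\delta k_0}$ and $\mathcal R_{ji}^{\delta k_0}$, and take the $L^2(D_i)$ inner product with the dual eigenfunction of the simple eigenvalue $\nu_i(\delta,\omega)$. By definition (Lemma~\ref{lem:projected-system}), the terms involving only the $q_j\phi_j$ give exactly $\mathcal L(\omega,p)q$. Everything else is collected into $\mathcal E_\delta(\omega,p)q$:
\begin{itemize}
\item the contributions of the remainders $r_j$;
\item the replacement of $\phi_i$ by $\widehat 1_{D_i}$ in the coupling coefficients, which is controlled by Lemma~\ref{lem:dominant-profile}.
\end{itemize}
To bound these contributions we use the following.
\begin{enumerate}
\item The spectral gap $c_*$ of Assumption~\ref{ass:one-mode-reduction}(ii) gives a uniformly bounded Riesz projection onto $\phi_i$ and a bounded reduced resolvent on its complement. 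This lets us treat $r_i$ as a slaved variable and, in principle, eliminate it Schur-complement style.
\item The separation $d_*$ makes the kernels of $\mathcal R_{ji}^{\delta k_0}$ smooth. Hence these operators are bounded uniformly in $\delta$ and $\omega\in W$.
\item The pole separation $\rho_*$ bounds the non-resonant parts of $\xi_i(\omega)$ uniformly on $W$.
\end{enumerate}
Combining these with $\|r_i\|\le e_\delta\|q\|$ yields $\|\mathcal E_\delta(\omega,p)\|\le Ce_\delta$ uniformly. Since the entries depend holomorphically on $\omega$ (the Green function is analytic in $k$ and $\xi_i$ is holomorphic away from its poles), $\mathcal E_\delta$ is holomorphic on $W$. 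The characterization of leading-order resonances as characteristic values of $\mathcal L$ then follows by dropping $\mathcal E_\delta$.

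\emph{Step 2: persistence of simple zeros.} Let $\omega_0$ be a simple zero of $F(\cdot,p)=\det\mathcal L(\cdot,p)$. Choose a small disk $B_r(\omega_0)\subset W$ on which $\omega_0$ is the only zero of $F$, and set $m_r:=\min_{|\omega-\omega_0|=r}|F(\omega,p)|>0$. Multilinearity of the determinant gives
\begin{align*}
  \bigl|\det(\mathcal L+\mathcal E_\delta)-\det\mathcal L\bigr|
  \le C_N\,\|\mathcal E_\delta\|\,\bigl(\|\mathcal L\|+\|\mathcal E_\delta\|\bigr)^{N-1}
  \le C' e_\delta,
\end{align*}
uniformly on $\overline{B_r(\omega_0)}$, because $\mathcal L$ is bounded there. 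For $\delta$ small, $C'e_\delta<m_r$, so Rouché's theorem shows that $\det(\mathcal L+\mathcal E_\delta)$ has exactly one zero in $B_r(\omega_0)$, counted with multiplicity. This zero is a resonance of the full projected problem and is the unique one nearby. The same argument applied to a zero of order $m$ gives $m$ zeros counted with multiplicity, which justifies the phrase ``counted with multiplicity.''

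For a quantitative distance, write $F(\omega)=F'(\omega_0)(\omega-\omega_0)+O(|\omega-\omega_0|^2)$ and apply Rouché on the circle of radius proportional to $e_\delta/|F'(\omega_0)|$. This localizes the perturbed zero within $O(e_\delta)$ of $\omega_0$.

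\emph{Main obstacle.} The delicate point is Step 1: obtaining an error bound that is genuinely uniform in $\omega\in W$ and in $p$. The $O(|\log\delta|^{-1})$ profile error must be paired with the size of $\delta^2\omega^2\xi_i$, and near the resonant denominators $B_i$ this product can be large. I would first normalize the equation by factoring out the local response, that is, working with $I-\operatorname{diag}(B_i)C$ rather than with $\xi_i$ itself. After this normalization only bounded combinations multiply the remainders, and the poles excluded through $\rho_*$ are exactly the ones that could spoil uniformity. If this normalization fails, I would shrink $W$ to a compact neighbourhood of the zero in question; Step 2 only needs the bound locally there.
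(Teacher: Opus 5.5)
Your proposal follows essentially the same route as the paper, whose proof is only a sketch that defers the estimates to the cited works. Both split the equation into self- and cross-interaction blocks, project onto the isolated modes via the spectral gap (the paper's ``pole-pencil decomposition''), absorb the remainder into $\mathcal E_\delta$, and apply Rouch\'e's theorem to $\det(\mathcal L+\mathcal E_\delta)$ on $\partial B(\omega_0,r)$; your Schur-complement elimination, determinant perturbation bound and $O(e_\delta)$ localization supply detail the paper omits.

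One correction to your item (2). In $d=2$, with the rescaled kernel $G(x-y,\delta k_0)$, the operators $\mathcal R_{ji}^{\delta k_0}$ are not uniformly bounded, because the kernel contains a constant of size $\frac{1}{2\pi}|\log\delta|$. The uniform bound must instead come from the products $\delta^2\omega^2\xi_j\mathcal R_{ji}^{\delta k_0}$, which stay $O(1)$ because $\delta^2\omega^2\xi_j\sim|\log\delta|^{-1}$ in the resonant regime. So in two dimensions the difficulty is large operators rather than a large $\delta^2\omega^2\xi_i$, and the normalization in your last paragraph is the right tool for it.
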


\begin{proof}
The complete proof of this statement can be found in \cite{alexopoulos2022asymptotic, alexopoulos2023mathematical, alexopoulos2026graded}. A sketch of the proof is provided in Appendix~\ref{app:proof-reduced}.
\end{proof}

\subsection{Nonlinear character of the reduced pencil}

The key point for the present paper is that $\mathcal L$ is not a matrix of the form $A(p)-\omega I$. Its entries depend on $\omega$ through the dispersive material factors $\xi_i(\omega)$ and through the Helmholtz volume-potential interactions. Here, non-affine dependence means dependence which cannot be written in the form $M_0(p)+\omega M_1(p)$ on the frequency window under consideration. Thus, the reduced problem is a nonlinear characteristic-value problem. 

\begin{lemma}[Material dispersion induces a nonlinear pencil]
\label{lem:dispersion-nonlinear-pencil}
Assume that, in the frequency window $W$, each material contrast
$\xi_i(\cdot,p)$ is holomorphic and that the projected self-interaction and
cross-interaction terms are holomorphic in $\omega$. Then, the reduced matrix
$\mathcal L(\cdot,p)$ is a holomorphic matrix-valued function on $W$. If at
least one entry of $\mathcal L$ depends non-affinely on $\omega$ through a
material contrast or through a frequency-dependent volume interaction, then,
the reduced resonance equation
\begin{align*}
  \mathcal L(\omega,p)q=0
\end{align*}
is a nonlinear characteristic-value problem.
\end{lemma}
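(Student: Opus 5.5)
The proof has two parts: holomorphy of $\mathcal L(\cdot,p)$ on $W$, and the observation that non-affine dependence on $\omega$ rules out the standard linear eigenvalue form. For holomorphy I will use the structure from Lemma~\ref{lem:projected-system}:
\begin{align*}
  \mathcal L(\omega,p)=I-\operatorname{diag}\bigl(B_1(\omega,p),\ldots,B_N(\omega,p)\bigr)C(\omega,p).
\end{align*}
I will write each local factor $B_i$ explicitly in terms of $\xi_i(\omega)$, $\omega^2$, $\delta^2$ and the projected self-interaction $\langle\mathcal K_i^{\delta k_0}\phi_i,\phi_i\rangle$ (or $\nu_i(\delta,\omega)$). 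The entries of $C$ are projected cross-interactions $\langle\mathcal R_{ji}^{\delta k_0}\phi_j,\phi_i\rangle$. By hypothesis $\xi_i$ and these projected terms are holomorphic on $W$. The factors $\omega^2$ and $\delta^2$ are polynomial. Sums and products of holomorphic functions are holomorphic, so every entry of $\mathcal L$ is holomorphic.

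If $B_i$ contains a quotient, for instance a resonant denominator $1-\omega^2\xi_i\nu_i$ inverted, I will use Assumption~\ref{ass:one-mode-reduction}(i). It keeps $W$ away from the non-resonant poles, so any remaining quotient has a non-vanishing denominator on $W$, or is already absorbed as holomorphic by the hypothesis. If needed, I will also note that the map $\omega\mapsto G(x-y,\delta k_0)$ with $k_0=\omega\sqrt{\varepsilon_0\mu_0}$ depends holomorphically on $\omega$ for $x\neq y$. The singularity is integrable, which gives holomorphy of the projected volume integrals by differentiation under the integral. I will treat this as already covered by the hypothesis rather than reprove it.

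For the second part I will argue directly from the definition given just before the lemma. A problem $\mathcal L(\omega)q=0$ is a linear (generalized) eigenvalue problem precisely when $\mathcal L(\omega,p)=M_0(p)+\omega M_1(p)$ on $W$. Suppose some entry $\ell_{ij}(\omega)$ is not of the form $a+b\omega$. Then no matrices $M_0,M_1$ can give this representation, since comparing the $(i,j)$ entries would force $\ell_{ij}=(M_0)_{ij}+\omega(M_1)_{ij}$, a contradiction. Hence the characteristic values are zeros of a holomorphic, non-affine matrix function, and the problem is a nonlinear characteristic-value problem in the sense of Gohberg--Sigal. Equivalently, $F(\omega,p)=\det\mathcal L(\omega,p)$ is a general holomorphic function rather than a polynomial of degree at most $N$ coming from a linear pencil. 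This last remark is optional.

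The main obstacle is the non-affinity claim when it comes from the material. Because of the structure $BC$, non-affine dependence of $\xi_i$ need not a priori show up in an entry: it could cancel against the $\omega$-dependence of $C$ or of $\nu_i$. I would avoid this by phrasing the conclusion exactly as the lemma does, conditional on some entry of $\mathcal L$ being non-affine. If a concrete illustration is wanted, I would use the diagonal entry $1-B_i C_{ii}$. I would note that a Lorentz or Drude type $\xi_i$, being meromorphic with a pole near $W$, cannot be affine on any open set, by the identity theorem. Since $\omega^2\nu_i$ is generically non-vanishing, the product is non-affine, because an affine function is entire and a function with a pole is not.
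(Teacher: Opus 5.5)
Your proposal is correct and is essentially the paper's argument. The appendix proof writes the entries as $1-\delta^2\omega^2\xi_i\langle\mathcal K_i^{\delta k_0}\phi_i,\phi_i\rangle$ and $-\delta^2\omega^2\xi_j\langle\mathcal R_{ji}^{\delta k_0}\phi_j,\phi_i\rangle$ and gets holomorphy from closure under sums and products, while your entrywise comparison with $M_0(p)+\omega M_1(p)$ is exactly the remark following the lemma.

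Two side comments should be dropped. First, these entries contain no quotient, and Assumption~\ref{ass:one-mode-reduction}(i) would not control an inverted resonant denominator anyway: that denominator vanishes at the isolated-resonator resonances, which typically lie in $W$. Second, your ``equivalently'' remark on $\det\mathcal L$ is false, since $\bigl(\begin{smallmatrix}1&\omega^2\\0&1\end{smallmatrix}\bigr)$ is non-affine yet has determinant $1$.
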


\begin{proof}
The proof of this statement is provided in Appendix~\ref{app:proof-dispersion-pencil}.
\end{proof}

\begin{remark}
    The resonance condition in Lemma \ref{lem:dispersion-nonlinear-pencil} is not equivalent, in general, to a linear eigenvalue problem $A(p)q=\omega B(p)q$. If the entries were equivalent to a linear pencil $A(p)-\omega B(p)$, then, each entry would be affine in $\omega$. A non-affine dependence inherited from a material contrast or from a frequency-dependent interaction contradicts this structure. Hence the reduced resonance equation is, in general, a nonlinear characteristic-value problem.
\end{remark}

\section{Exceptional Points of Nonlinear Matrix Pencils}
\label{sec:nonlinear-pencils}

Let us now formulate exceptional points in the nonlinear setting introduced in the previous section.


\subsection{Characteristic values}

Throughout this section the parameter $p$ is fixed, and $\mathcal L(\cdot,p)$ is regarded as a matrix-valued function of the complex frequency. The word pencil refers to this frequency-dependent matrix-valued function, not necessarily to a linear pencil of the form $A-\omega B$. We work away from the poles of the dispersive material law, so that $\mathcal L(\cdot,p)$ is holomorphic in the frequency window under consideration. We also assume that the pencil is regular in this window, which means that $\det\mathcal L(\cdot,p)$ is not identically zero.

\begin{definition}[Characteristic value]
\label{def:characteristic-value}
A frequency $\omega_*$ is a characteristic value of $\mathcal L(\cdot,p)$ if
\begin{align*}
  \ker \mathcal L(\omega_*,p)\neq \{0\}.
\end{align*}
Equivalently, if
\begin{align*}
  F(\omega,p):=\det \mathcal L(\omega,p),
\end{align*}
then, $\omega_*$ is a characteristic value if and only if
\begin{align*}
  F(\omega_*,p)=0.
\end{align*}
\end{definition}

The \emph{geometric multiplicity} of a characteristic value is defined by
\begin{align*}
  \geo_{\omega_*}\mathcal L(\cdot,p)
  :=
  \dim\ker \mathcal L(\omega_*,p).
\end{align*}
This is the number of linearly independent reduced resonant modes associated with the frequency $\omega_*$.

\subsection{Algebraic multiplicity}

Since the reduced problem is finite-dimensional, the algebraic multiplicity can be read from the determinant. We say that $\omega_*$ has \emph{algebraic multiplicity} $m$ if $F(\cdot,p)$ has a zero of order $m$ at $\omega_*$. Thus,
\begin{align*}
  F(\omega,p)
  =
  c(\omega-\omega_*)^m
  +
  O\bigl((\omega-\omega_*)^{m+1}\bigr),
  \qquad c\neq 0.
\end{align*}
Equivalently,
\begin{align*}
  F(\omega_*,p)
  =
  \partial_\omega F(\omega_*,p)
  =
  \cdots
  =
  \partial_\omega^{m-1}F(\omega_*,p)
  =
  0,
\end{align*}
while
\begin{align*}
  \partial_\omega^mF(\omega_*,p)\neq 0.
\end{align*}
We denote this integer by
\begin{align*}
  \alg_{\omega_*}\mathcal L(\cdot,p).
\end{align*}

\begin{remark}
For a holomorphic matrix-valued function, this determinant-based definition is the finite-dimensional version of the algebraic multiplicity of a characteristic value in the sense of Gohberg--Sigal theory \cite{gohberg1971operator}. This is the notion used in subwavelength resonance theory when resonances are formulated as characteristic values of operator-valued functions.
\end{remark}

\subsection{Exceptional points}

\begin{definition}[Exceptional point of the reduced system]
\label{def:reduced-exceptional-point}
Let $\omega_*$ be a characteristic value of $\mathcal L(\cdot,p_*)$. We say
that $(\omega_*,p_*)$ is an exceptional point of order $m$ if
\begin{align*}
  \alg_{\omega_*}\mathcal L(\cdot,p_*) = m,
  \qquad
  \geo_{\omega_*}\mathcal L(\cdot,p_*) = 1.
\end{align*}
\end{definition}

In terms of the determinant, $(\omega_*,p_*)$ is therefore an exceptional point of order $m$ if
\begin{align}
  F(\omega_*,p_*)
  =
  \partial_\omega F(\omega_*,p_*)
  =
  \cdots
  =
  \partial_\omega^{m-1}F(\omega_*,p_*)
  =
  0,
  \label{eq:ep-derivative-conditions}
\end{align}
\begin{align}
  \partial_\omega^mF(\omega_*,p_*)\neq 0,
  \qquad
  \dim\ker \mathcal L(\omega_*,p_*)=1.
  \label{eq:ep-nondegeneracy-conditions}
\end{align}

\begin{lemma}[Determinant criterion for reduced exceptional points]
\label{lem:determinant-criterion}
Assume that $\mathcal L(\cdot,p_*)$ is holomorphic near $\omega_*$ and that
$F(\omega,p_*)=\det\mathcal L(\omega,p_*)$ has a zero of order $m$ at
$\omega_*$. If
\begin{align*}
  \dim\ker \mathcal L(\omega_*,p_*)=1,
\end{align*}
then, $(\omega_*,p_*)$ is an exceptional point of order $m$ of the reduced dispersive resonator system.
\end{lemma}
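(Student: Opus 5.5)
The statement is essentially a direct unpacking of Definition~\ref{def:reduced-exceptional-point}, so the plan is to check that each hypothesis matches one clause of that definition. Once this is done, the determinant conditions \eqref{eq:ep-derivative-conditions}--\eqref{eq:ep-nondegeneracy-conditions} follow automatically.

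First, I would confirm that $\omega_*$ is a characteristic value of $\mathcal L(\cdot,p_*)$. Since $F(\cdot,p_*)$ has a zero of order $m$ at $\omega_*$, and a zero of order $m$ means $m\ge 1$, we have $F(\omega_*,p_*)=0$. By Definition~\ref{def:characteristic-value}, this gives $\ker\mathcal L(\omega_*,p_*)\neq\{0\}$. The same conclusion also follows directly from the hypothesis $\dim\ker\mathcal L(\omega_*,p_*)=1$.

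Next, I would identify the two multiplicities. Holomorphy of $\mathcal L(\cdot,p_*)$ near $\omega_*$ makes $F(\cdot,p_*)$ holomorphic, as a polynomial in the holomorphic entries. Its Taylor expansion therefore has the form $F(\omega,p_*)=c(\omega-\omega_*)^m+O((\omega-\omega_*)^{m+1})$ with $c\neq0$. This is exactly the definition of $\alg_{\omega_*}\mathcal L(\cdot,p_*)=m$, and it is equivalent to the vanishing of $\partial_\omega^jF(\omega_*,p_*)$ for $j<m$ together with $\partial_\omega^mF(\omega_*,p_*)=m!\,c\neq0$. The kernel hypothesis is, by definition, $\geo_{\omega_*}\mathcal L(\cdot,p_*)=1$. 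Both clauses of Definition~\ref{def:reduced-exceptional-point} now hold, which gives the conclusion.

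There is no real obstacle here. The only point needing care is the regularity of the pencil: the notion of a zero of finite order $m$ requires $F(\cdot,p_*)\not\equiv0$, and this is implicit in the hypothesis that the order is $m$. I would also add a brief remark on consistency. For a holomorphic matrix function, the inequality $\alg\ge\geo$ holds, since the Smith local form gives $\det\mathcal L=\text{unit}\cdot\prod_j(\omega-\omega_*)^{\kappa_j}$ with $\#\{j:\kappa_j\ge1\}=\geo$. Hence the hypotheses force $m\ge1$, and the point is genuinely defective precisely when $m\ge2$.
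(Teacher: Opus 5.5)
Your proposal is correct and follows the same route as the paper's proof. The paper simply identifies the order of the zero of $F(\cdot,p_*)$ with $\alg_{\omega_*}\mathcal L(\cdot,p_*)$ and $\dim\ker\mathcal L(\omega_*,p_*)$ with $\geo_{\omega_*}\mathcal L(\cdot,p_*)$, then invokes Definition~\ref{def:reduced-exceptional-point}; your extra checks (that $\omega_*$ is a characteristic value, and the local Smith-form remark giving $\alg\ge\geo$) are accurate but go beyond what the paper records.
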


\begin{proof}
The proof of this statement is provided in Appendix~\ref{app:proof-det-criterion}.
\end{proof}

\begin{remark}
If $\mathcal L(\omega,p)=A(p)-\omega I$, then, the definition above reduces to the usual notion of an exceptional point for a finite-dimensional non-Hermitian matrix. In the present setting the same defect occurs, but for a nonlinear characteristic equation. The derivatives of $F$ therefore contain contributions from both the dispersive material factors and the frequency-dependent volume interactions.
\end{remark}

\subsection{Material dispersion in the exceptional-point conditions}
\label{subsec:dispersion-ep-conditions}

The determinant conditions \eqref{eq:ep-derivative-conditions} make the role of dispersion explicit. Indeed, by Jacobi's formula,
\begin{align*}
  \partial_\omega F(\omega,p)
  =
  \tr\bigl(
    \adj\mathcal L(\omega,p)
    \partial_\omega\mathcal L(\omega,p)
  \bigr).
\end{align*}
Since $\partial_\omega\mathcal L$ contains the material derivatives $\partial_\omega\xi_i$ as well as the frequency derivatives of the interaction terms, the equations defining a multiple characteristic value are changed by the dispersive material law itself, and not only by the values of the material parameters at the exceptional frequency.

To isolate this effect, fix a candidate exceptional point $(\omega_*,p_*)$ and define the frozen-material pencil $\mathcal L_{\mathrm{fr}}$ by replacing each material contrast by its value at $\omega_*$,
\begin{align*}
  \xi_i(\omega,p)
  \quad\longmapsto\quad
  \xi_i(\omega_*,p),
\end{align*}
while keeping the same geometric and interaction model. Thus, $\mathcal L_{\mathrm{fr}}$ has the same material values at $\omega_*$, but it does not contain the material derivatives $\partial_\omega\xi_i(\omega_*,p)$.

\begin{lemma}[Effect of freezing the material response]
\label{lem:frozen-material}
Assume that $\mathcal L$ and $\mathcal L_{\mathrm{fr}}$ are holomorphic near
$(\omega_*,p_*)$ and that
\begin{align*}
  \mathcal L_{\mathrm{fr}}(\omega_*,p_*)
  =
  \mathcal L(\omega_*,p_*).
\end{align*}
Then,
\begin{align*}
  \partial_\omega F(\omega_*,p_*)
  -
  \partial_\omega F_{\mathrm{fr}}(\omega_*,p_*)
  =
  \tr\Bigl(
    \adj\mathcal L(\omega_*,p_*)
    \bigl(
      \partial_\omega\mathcal L
      -
      \partial_\omega\mathcal L_{\mathrm{fr}}
    \bigr)(\omega_*,p_*)
  \Bigr),
\end{align*}
where $F=\det\mathcal L$ and $F_{\mathrm{fr}}=\det\mathcal L_{\mathrm{fr}}$. In particular, the double-root equation for the dispersive pencil differs from the frozen-material equation by terms containing the material derivatives $\partial_\omega\xi_i(\omega_*,p_*)$.
\end{lemma}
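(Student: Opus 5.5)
The plan is to apply Jacobi's formula to both pencils at the common point $(\omega_*,p_*)$ and then subtract. Since $\mathcal L$ and $\mathcal L_{\mathrm{fr}}$ are holomorphic matrix-valued functions near $(\omega_*,p_*)$, their determinants $F$ and $F_{\mathrm{fr}}$ are holomorphic there. Jacobi's formula
\begin{align*}
  \partial_\omega \det M(\omega) = \tr\bigl(\adj M(\omega)\,\partial_\omega M(\omega)\bigr)
\end{align*}
holds for any differentiable square matrix function $M$, with no invertibility assumption. This matters because at a characteristic value $\mathcal L(\omega_*,p_*)$ is singular. To justify the formula in this generality, I would expand $\det M$ as a multilinear function of the columns. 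Differentiating then gives a sum of determinants, each with one column replaced by its derivative. Expanding each of these along the replaced column yields cofactors, and summing them gives exactly $\tr(\adj M\,\partial_\omega M)$.

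Next, I would evaluate Jacobi's formula for $M=\mathcal L(\cdot,p_*)$ and for $M=\mathcal L_{\mathrm{fr}}(\cdot,p_*)$ at $\omega_*$. The adjugate is a polynomial in the matrix entries. The hypothesis $\mathcal L_{\mathrm{fr}}(\omega_*,p_*)=\mathcal L(\omega_*,p_*)$ therefore gives $\adj\mathcal L_{\mathrm{fr}}(\omega_*,p_*)=\adj\mathcal L(\omega_*,p_*)$. Subtracting the two formulas and using the linearity of the trace produces the displayed identity directly.

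For the ``in particular'' statement, I would use the chain rule on the entries. The frozen pencil is obtained by substituting the constant $\xi_i(\omega_*,p)$ for $\xi_i(\omega,p)$, with the geometric and interaction model unchanged. Hence the entries of $\mathcal L$ and $\mathcal L_{\mathrm{fr}}$ are the same functions $\Phi$ of $\omega$ and of the material arguments. The only difference is whether those arguments vary with $\omega$. Differentiating in $\omega$ at $\omega_*$ gives
\begin{align*}
  \bigl(\partial_\omega\mathcal L-\partial_\omega\mathcal L_{\mathrm{fr}}\bigr)(\omega_*,p_*)
  =\sum_{i=1}^N \partial_{\xi_i}\Phi\big|_{(\omega_*,p_*)}\,\partial_\omega\xi_i(\omega_*,p_*).
\end{align*}
The explicit $\omega$-dependence coming from the interaction terms cancels in this difference. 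Consequently, the difference of the double-root conditions $\partial_\omega F=0$ and $\partial_\omega F_{\mathrm{fr}}=0$ is a linear combination of the material derivatives $\partial_\omega\xi_i(\omega_*,p_*)$, with coefficients $\tr(\adj\mathcal L\,\partial_{\xi_i}\Phi)$.

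I expect no serious obstacle. The only point needing care is making precise that the frozen pencil is the same entrywise function evaluated with frozen material arguments, so that the chain-rule cancellation is legitimate. In the schematic form of Lemma~\ref{lem:projected-system}, this amounts to $B_i$ depending on $\omega$ through $\xi_i$ and possibly through explicit interaction factors, which I would state as the standing interpretation.
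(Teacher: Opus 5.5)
Your proposal is correct and follows the paper's proof. You apply Jacobi's formula to $\mathcal L(\cdot,p_*)$ and $\mathcal L_{\mathrm{fr}}(\cdot,p_*)$ at $\omega_*$, use that the adjugates coincide because the matrices do, and subtract. Your chain-rule computation for the ``in particular'' clause expresses the difference as $\sum_i \tr\bigl(\adj\mathcal L\,\partial_{\xi_i}\Phi\bigr)\,\partial_\omega\xi_i(\omega_*,p_*)$, where $\Phi$ is your entrywise function; this is a more explicit version of the paper's one-line remark that these terms are absent from the frozen pencil.
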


\begin{proof}
The proof of this statement is provided in Appendix~\ref{app:proof-frozen-material}.
\end{proof}

\begin{remark}
The frozen-material pencil is not introduced as a competing physical model. It is a diagnostic comparison. If the material law is frozen, the values of the contrast at $\omega_*$ are retained, but the material contribution to $\partial_\omega\mathcal L$ is removed. Thus, the comparison separates the effect of the material values from the effect of material dispersion.
\end{remark}

\subsection{Reduced-to-full interpretation}

\begin{corollary}[Persistence of an exceptional resonance cluster]
\label{cor:exceptional-cluster}
Let us assume that Assumption~\ref{ass:one-mode-reduction} holds. Let
$(\omega_*,p_*)$ be an order-$m$ exceptional point of the reduced matrix
$\mathcal L$, and assume that $\omega_*$ is the only zero of $F(\cdot,p_*)$ in
$\overline{B(\omega_*,r)}\subset W$. Then, for all sufficiently small
$\delta$, the full projected determinant
\begin{align*}
  F_\delta(\omega,p_*)
  :=
  \det\bigl(\mathcal L(\omega,p_*)+\mathcal E_\delta(\omega,p_*)\bigr)
\end{align*}
has exactly $m$ zeros in $B(\omega_*,r)$, counted with multiplicity. Moreover, these zeros converge to $\omega_*$ as $\delta\to0$.
\end{corollary}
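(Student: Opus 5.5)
The argument is a Rouché comparison between the reduced determinant $F(\cdot,p_*)=\det\mathcal L(\cdot,p_*)$ and the perturbed determinant $F_\delta(\cdot,p_*)$ on the circle $\partial B(\omega_*,r)$. By Definition~\ref{def:reduced-exceptional-point}, $\omega_*$ is a zero of $F(\cdot,p_*)$ of order exactly $m$; by hypothesis it is the only zero in the closed disc. Hence $F(\cdot,p_*)$ has exactly $m$ zeros in $B(\omega_*,r)$, counted with multiplicity. Geometric multiplicity plays no role in the counting; only the algebraic order $m$ matters.

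\emph{Lower bound on the circle.} Since $F(\cdot,p_*)$ is holomorphic on a neighbourhood of $\overline{B(\omega_*,r)}\subset W$ and has no zeros on $\partial B(\omega_*,r)$, compactness gives $\eta:=\min_{|\omega-\omega_*|=r}|F(\omega,p_*)|>0$.

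\emph{Smallness of the perturbation.} Theorem~\ref{thm:reduced-resonance-equation} gives $\|\mathcal E_\delta(\omega,p_*)\|\le Ce_\delta$ uniformly on $W$. The entries of $\mathcal L(\cdot,p_*)$ are bounded on the compact circle, say $\|\mathcal L\|\le M$. The determinant is a polynomial in the matrix entries, so it is locally Lipschitz: for $N\times N$ matrices $A,E$ one has
\begin{align*}
  |\det(A+E)-\det A|\le N\|E\|\bigl(\|A\|+\|E\|\bigr)^{N-1},
\end{align*}
for instance by expanding multilinearly in the columns. Therefore
\begin{align*}
  \sup_{|\omega-\omega_*|=r}|F_\delta(\omega,p_*)-F(\omega,p_*)|\le N Ce_\delta(M+Ce_\delta)^{N-1}\longrightarrow0 .
\end{align*}
One must also know that $\mathcal E_\delta(\cdot,p_*)$, and hence $F_\delta$, is holomorphic in $\omega$ on $W$. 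This follows from the construction in Theorem~\ref{thm:reduced-resonance-equation}, since the projections of the holomorphic volume-potential operators are holomorphic. I expect this to be the main technical point: the stated theorem only records the norm bound, so holomorphy has to be extracted from the projection construction.

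\emph{Counting.} For $\delta$ small the difference is below $\eta$, so Rouché's theorem shows that $F_\delta(\cdot,p_*)$ has exactly $m$ zeros in $B(\omega_*,r)$, counted with multiplicity.

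\emph{Convergence.} Fix any $0<r'<r$. The disc $\overline{B(\omega_*,r')}$ still contains only the zero $\omega_*$ of $F(\cdot,p_*)$, of order $m$. Repeating the argument with $r'$ in place of $r$ shows that for $\delta$ small all $m$ zeros lie in $B(\omega_*,r')$. Since $r'$ is arbitrary, the zeros converge to $\omega_*$ as $\delta\to0$.

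If a rate is desired, note that $|F(\omega,p_*)|\ge c|\omega-\omega_*|^m$ near $\omega_*$, with $c=|\partial_\omega^mF(\omega_*,p_*)|/(2\,m!)$. Combined with the bound above, this gives $|\omega-\omega_*|=O(e_\delta^{1/m})$, which is the Puiseux-type spreading of the cluster.
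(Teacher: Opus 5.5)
Your proposal is correct and follows the same route as the paper. You bound $|F(\cdot,p_*)|$ from below on $\partial B(\omega_*,r)$, get uniform convergence $F_\delta\to F$ there from $\|\mathcal E_\delta\|\le Ce_\delta$, count zeros with Rouch\'e's theorem, and shrink the radius to get convergence. You also make explicit two points the paper leaves implicit, the determinant perturbation bound and the holomorphy of $F_\delta$ in $\omega$, and your $O(e_\delta^{1/m})$ rate is a correct addition.
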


\begin{proof}
The proof of this statement is provided in Appendix~\ref{app:proof-cluster}.
\end{proof}

\begin{remark}
Corollary~\ref{cor:exceptional-cluster} is the reduced-to-full statement used in this paper. The determinant equations are solved for the leading matrix $\mathcal L$. The corresponding full projected problem does not necessarily have an exact multiple zero at the same value of $\omega$ for finite $\delta$, but it has a cluster of $m$ nearby resonances with the same total multiplicity. Thus, the reduced exceptional point describes the leading position of the coalescing resonance cluster of the original Lippmann--Schwinger problem.
\end{remark}

\subsection{Local inverse design of second-order exceptional points}
\label{subsec:inverse-design-theorem}

For second-order exceptional points, the conditions \eqref{eq:ep-derivative-conditions}--\eqref{eq:ep-nondegeneracy-conditions} reduce to the two complex equations
\begin{align*}
  F(\omega,p)=0,
  \qquad
  \partial_\omega F(\omega,p)=0,
\end{align*}
together with the nondegeneracy conditions $\partial_\omega^2F(\omega,p)\neq0$ and $\dim\ker\mathcal L(\omega,p)=1$. This gives a natural local inverse-design problem: tune two real active parameters in order to impose the real and imaginary parts of these two equations.

\begin{theorem}[Transversality and local inverse design]
\label{thm:transversality-inverse-design}
Let $p=(a,b)\in\mathbb R^2\times\mathbb R^s$, where $a$ denotes two active
design parameters and $b$ denotes the remaining parameters. Suppose that
$\mathcal L(\omega,a,b)$ is holomorphic in $\omega$ and continuously
differentiable in $(a,b)$ near a second-order exceptional point
$(\omega_*,a_*,b_*)$. Define
\begin{align*}
  \mathcal T(\omega,a,b)
  :=
  \left(
    \operatorname{Re}F,
    \operatorname{Im}F,
    \operatorname{Re}\partial_\omega F,
    \operatorname{Im}\partial_\omega F
  \right)(\omega,a,b).
\end{align*}
If the real Jacobian
\begin{align*}
  D_{(\operatorname{Re}\omega,\operatorname{Im}\omega,a)}
  \mathcal T(\omega_*,a_*,b_*)
\end{align*}
is invertible, then, for $b$ sufficiently close to $b_*$, there exist unique functions $\omega(b)$ and $a(b)$ near $\omega_*$ and $a_*$ such that
\begin{align*}
  F(\omega(b),a(b),b)=0,
  \qquad
  \partial_\omega F(\omega(b),a(b),b)=0.
\end{align*}
If, in addition, $\partial_\omega^2F(\omega_*,a_*,b_*)\neq0$ and $\dim\ker\mathcal L(\omega_*,a_*,b_*)=1$, then, the resulting points remain second-order exceptional points for $b$ sufficiently close to $b_*$.
\end{theorem}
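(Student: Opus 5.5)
The plan is to apply the real implicit function theorem to the map $\mathcal T$, viewed as a map from $\mathbb R^4\times\mathbb R^s$ to $\mathbb R^4$. The unknowns are $(\operatorname{Re}\omega,\operatorname{Im}\omega,a)\in\mathbb R^4$ and the parameter is $b\in\mathbb R^s$.

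\textbf{Regularity of $\mathcal T$.} Since the entries of $\mathcal L$ are holomorphic in $\omega$ and $C^1$ in $(a,b)$, the determinant $F$ is a polynomial in these entries and inherits both properties. The key point is that $\partial_\omega F$ must also be jointly $C^1$. To get this, fix a small closed disc $\overline{B(\omega_*,r)}$ and use the Cauchy integral representation
\begin{align*}
  \partial_\omega F(\omega,a,b)=\frac{1}{2\pi i}\oint_{|\zeta-\omega_*|=r}\frac{F(\zeta,a,b)}{(\zeta-\omega)^2}\,d\zeta,
  \qquad |\omega-\omega_*|<r.
\end{align*}
Differentiating under the integral shows that $\partial_\omega F$ is $C^1$ in $(a,b)$. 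Its $\omega$-derivatives are given by similar integrals that are continuous in all variables. Hence $\mathcal T$ is $C^1$ jointly in $(\operatorname{Re}\omega,\operatorname{Im}\omega,a,b)$. This requires $F$ and $D_{(a,b)}F$ to be continuous jointly on the contour, which I would state as the meaning of the hypothesis; if necessary I would add joint continuity of $D_{(a,b)}\mathcal L$ explicitly.

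\textbf{Existence and uniqueness.} Since $(\omega_*,a_*,b_*)$ is a second-order exceptional point, $\mathcal T(\omega_*,a_*,b_*)=0$. The Jacobian with respect to $(\operatorname{Re}\omega,\operatorname{Im}\omega,a)$ is invertible by assumption. The implicit function theorem therefore gives neighbourhoods and a unique $C^1$ map $b\mapsto(\omega(b),a(b))$ with $\mathcal T=0$, which is exactly $F=\partial_\omega F=0$.

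\textbf{Persistence of the nondegeneracy conditions.} Both conditions are open.
\begin{itemize}
  \item[(i)] $\partial^2_\omega F$ is continuous in all variables, again by a Cauchy integral. Since $\partial_\omega^2F(\omega_*,a_*,b_*)\neq0$, it stays nonzero along $(\omega(b),a(b),b)$ for $b$ near $b_*$.
  \item[(ii)] $\dim\ker\mathcal L(\omega_*,a_*,b_*)=1$ means $\rank\mathcal L=N-1$, so some $(N-1)\times(N-1)$ minor is nonzero there. That minor is continuous, so it remains nonzero nearby, and $\rank\mathcal L(\omega(b),a(b),b)\ge N-1$. On the other hand $F=0$ forces the rank to be at most $N-1$. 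The kernel is therefore exactly one-dimensional.
\end{itemize}
Now $F(\cdot,a(b),b)$ vanishes to order exactly two at $\omega(b)$. Lemma~\ref{lem:determinant-criterion} then shows that $(\omega(b),a(b),b)$ is a second-order exceptional point.

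\textbf{Main obstacle.} The only delicate step is the joint $C^1$ regularity of $\partial_\omega F$ under mixed holomorphic/$C^1$ hypotheses. The Cauchy-integral argument above is how I would handle it. Everything else is a direct application of the implicit function theorem and openness of nondegeneracy conditions.
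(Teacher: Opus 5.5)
Your proposal is correct and follows the same route as the paper: you apply the implicit function theorem to $\mathcal T$ in the unknowns $(\operatorname{Re}\omega,\operatorname{Im}\omega,a)$, then use openness of $\partial_\omega^2F\neq0$ and of a nonvanishing $(N-1)\times(N-1)$ minor, with $F=0$ along the branch fixing the rank at exactly $N-1$. Your Cauchy-integral argument for the joint $C^1$ regularity of $\partial_\omega F$ supplies a detail the paper leaves implicit, and making joint continuity of $D_{(a,b)}\mathcal L$ an explicit assumption, as you suggest, is the right way to secure it.
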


\begin{proof}
The proof of this statement is provided in Appendix~\ref{app:proof-transversality}.
\end{proof}


\section{Second-Order Exceptional Points in Dimers}
\label{sec:dimers}

We now specialize the abstract determinant conditions to the first non-trivial array, namely a dimer. This section makes the general theory explicit: the coalescence conditions become threshold equations involving detuning, damping and coupling, and the nonlinear dependence on $\omega$ remains visible through the entries of the reduced pencil.

\subsection{General dimer criterion}

Let us first consider the case of two coupled resonators. The reduced resonance problem takes the form
\begin{align*}
  \mathcal L(\omega,p)
  =
  \begin{pmatrix}
    a_1(\omega,p) & b(\omega,p) \\
    c(\omega,p) & a_2(\omega,p)
  \end{pmatrix},
\end{align*}
so that
\begin{align*}
  F(\omega,p)
  =
  a_1(\omega,p)a_2(\omega,p)-b(\omega,p)c(\omega,p).
\end{align*}
A second-order exceptional point satisfies
\begin{align*}
  F(\omega_*,p_*)=0,
  \qquad
  \partial_\omega F(\omega_*,p_*)=0,
  \qquad
  \dim\ker \mathcal L(\omega_*,p_*)=1.
\end{align*}
In terms of the matrix entries, the first two conditions are
\begin{align*}
  a_1a_2-bc=0 \qquad \text{and} \qquad  a_1'a_2+a_1a_2'-b'c-bc'=0,
\end{align*}
evaluated at $(\omega_*,p_*)$. Here and throughout this section primes denote differentiation with respect to $\omega$.

\begin{lemma}[Dimer exceptional point criterion]
\label{lem:dimer-criterion}
Assume that the entries of $\mathcal L(\cdot,p_*)$ are holomorphic near
$\omega_*$. If
\begin{align*}
  a_1(\omega_*,p_*)a_2(\omega_*,p_*)
  -
  b(\omega_*,p_*)c(\omega_*,p_*)=0,
\end{align*}
\begin{align*}
  a_1'(\omega_*,p_*)a_2(\omega_*,p_*)
  +
  a_1(\omega_*,p_*)a_2'(\omega_*,p_*)
  -
  b'(\omega_*,p_*)c(\omega_*,p_*)
  -
  b(\omega_*,p_*)c'(\omega_*,p_*)=0,
\end{align*}
\begin{align*}
  \partial_\omega^2F(\omega_*,p_*)\neq 0,
  \qquad
  \dim\ker\mathcal L(\omega_*,p_*)=1,
\end{align*}
then, $(\omega_*,p_*)$ is a second-order exceptional point of the reduced dimer.
\end{lemma}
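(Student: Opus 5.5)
The plan is to reduce the statement to the determinant criterion of Lemma~\ref{lem:determinant-criterion} with $m=2$. First I expand the dimer determinant explicitly: for the $2\times2$ pencil,
\begin{align*}
  F(\omega,p_*)=a_1(\omega,p_*)a_2(\omega,p_*)-b(\omega,p_*)c(\omega,p_*).
\end{align*}
Since the entries are holomorphic near $\omega_*$, so is $F(\cdot,p_*)$. By the product rule, $\partial_\omega F=a_1'a_2+a_1a_2'-b'c-bc'$. This agrees with Jacobi's formula $\partial_\omega F=\tr(\adj\mathcal L\,\partial_\omega\mathcal L)$, because $\adj\mathcal L=\begin{pmatrix}a_2&-b\\-c&a_1\end{pmatrix}$. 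Hence the first two displayed hypotheses say exactly that $F(\omega_*,p_*)=0$ and $\partial_\omega F(\omega_*,p_*)=0$.

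Next I establish the order of the zero. $F(\cdot,p_*)$ is holomorphic and not identically zero, because $\partial_\omega^2F(\omega_*,p_*)\neq0$. Its Taylor expansion at $\omega_*$ is therefore
\begin{align*}
  F(\omega,p_*)=\tfrac12\partial_\omega^2F(\omega_*,p_*)(\omega-\omega_*)^2+O\bigl((\omega-\omega_*)^3\bigr),
\end{align*}
which is a zero of order exactly $2$. So $\alg_{\omega_*}\mathcal L(\cdot,p_*)=2$. In particular $\omega_*$ is a characteristic value in the sense of Definition~\ref{def:characteristic-value}, and the pencil is regular near $\omega_*$.

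The remaining hypothesis, $\dim\ker\mathcal L(\omega_*,p_*)=1$, gives $\geo_{\omega_*}\mathcal L(\cdot,p_*)=1$. Lemma~\ref{lem:determinant-criterion} with $m=2$, or directly Definition~\ref{def:reduced-exceptional-point}, then yields that $(\omega_*,p_*)$ is a second-order exceptional point.

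There is no real obstacle here; the only point that needs care is confirming that the entrywise derivative expression equals $\partial_\omega F$ and that the zero has order exactly two. I would also remark that for a $2\times2$ matrix the rank-one condition is equivalent to $\mathcal L(\omega_*,p_*)\neq0$ once $F(\omega_*,p_*)=0$. It therefore suffices that some entry among $a_1,a_2,b,c$ is nonzero at $(\omega_*,p_*)$, which is the form in which the condition can be checked in practice.
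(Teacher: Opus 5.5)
Your proposal is correct and takes essentially the same route as the paper: identify the two displayed identities as $F(\omega_*,p_*)=0$ and $\partial_\omega F(\omega_*,p_*)=0$, use $\partial_\omega^2F(\omega_*,p_*)\neq0$ to get a zero of order exactly two, and conclude from Lemma~\ref{lem:determinant-criterion} with $m=2$. Your extra checks (the Jacobi-formula consistency and the remark that a singular nonzero $2\times2$ matrix has one-dimensional kernel) are accurate but not needed beyond what the paper does.
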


\begin{proof}
The proof of this statement is provided in Appendix~\ref{app:proof-dimer-criterion}.
\end{proof}

\subsection{Balanced detuning and symmetric coupling}

We now specialize this criterion to a dimer model which separates the common local dispersive response from the detuning and the coupling. We write
\begin{align*}
  \mathcal L(\omega,p)
  =
  \begin{pmatrix}
    A(\omega,p)+\Delta(\omega,p) & -\kappa(\omega,p) \\
    -\kappa(\omega,p) & A(\omega,p)-\Delta(\omega,p)
  \end{pmatrix}.
\end{align*}
Here $A$ represents the common isolated resonator response, $\Delta$ represents detuning or balanced gain/loss asymmetry, and $\kappa$ represents the projected volume-mediated coupling between the two resonators. The determinant is
\begin{align*}
  F(\omega,p)
  =
  A(\omega,p)^2
  -
  \Delta(\omega,p)^2
  -
  \kappa(\omega,p)^2.
\end{align*}
Hence a second-order exceptional point must satisfy
\begin{align*}
  A(\omega_*,p_*)^2
  -
  \Delta(\omega_*,p_*)^2
  -
  \kappa(\omega_*,p_*)^2
  =
  0,
\end{align*}
\begin{align*}
  A(\omega_*,p_*)A'(\omega_*,p_*)
  -
  \Delta(\omega_*,p_*)\Delta'(\omega_*,p_*)
  -
  \kappa(\omega_*,p_*)\kappa'(\omega_*,p_*)
  =
  0.
\end{align*}
The geometric condition is automatic unless the whole matrix vanishes. Indeed, a nonzero singular $2\times2$ matrix has a one-dimensional kernel.

\begin{lemma}[Balanced dimer exceptional point equations]
\label{lem:balanced-dimer}
Assume that $A,\Delta$ and $\kappa$ are holomorphic near $\omega_*$. Suppose
that
\begin{align*}
  A_*^2-\Delta_*^2-\kappa_*^2=0, \qquad A_*A_*'-\Delta_*\Delta_*'-\kappa_*\kappa_*'=0
\end{align*}
and
\begin{align*}
  \partial_\omega^2
  \left(
    A^2-\Delta^2-\kappa^2
  \right)(\omega_*,p_*)\neq 0,
\end{align*}
where $A_*=A(\omega_*,p_*)$, $\Delta_*=\Delta(\omega_*,p_*)$ and $\kappa_*=\kappa(\omega_*,p_*)$. If
\begin{align*}
  \mathcal L(\omega_*,p_*)\neq 0,
\end{align*}
then, $(\omega_*,p_*)$ is a second-order exceptional point.
\end{lemma}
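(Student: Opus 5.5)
The plan is to reduce the statement to Lemma~\ref{lem:determinant-criterion} (equivalently to Lemma~\ref{lem:dimer-criterion} with $a_1=A+\Delta$, $a_2=A-\Delta$, $b=c=-\kappa$). That lemma needs two inputs: that $F(\cdot,p_*)=A^2-\Delta^2-\kappa^2$ has a zero of order exactly two at $\omega_*$, and that $\dim\ker\mathcal L(\omega_*,p_*)=1$.

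\textbf{Order of the zero.} Since $A,\Delta,\kappa$ are holomorphic near $\omega_*$, so is $F$. The first hypothesis gives $F(\omega_*,p_*)=0$. Differentiating,
\begin{align*}
  \partial_\omega F = 2\bigl(AA'-\Delta\Delta'-\kappa\kappa'\bigr),
\end{align*}
so the second hypothesis gives $\partial_\omega F(\omega_*,p_*)=0$. The third hypothesis is exactly $\partial_\omega^2F(\omega_*,p_*)\neq0$. Taylor expansion of the holomorphic function $F$ then shows that the zero has order exactly $2$, so $\alg_{\omega_*}\mathcal L(\cdot,p_*)=2$.

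\textbf{Geometric multiplicity.} Because $\det\mathcal L(\omega_*,p_*)=F(\omega_*,p_*)=0$, the rank of $\mathcal L(\omega_*,p_*)$ is at most $1$. Since $\mathcal L(\omega_*,p_*)\neq0$, the rank is exactly $1$. By rank--nullity, $\dim\ker\mathcal L(\omega_*,p_*)=2-1=1$.

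\textbf{Conclusion.} Combining the two steps, Lemma~\ref{lem:determinant-criterion} (or Definition~\ref{def:reduced-exceptional-point} directly) yields that $(\omega_*,p_*)$ is a second-order exceptional point.

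There is no real obstacle here. The only point requiring care is the derivative identity, specifically the factor $2$, which shows that the stated condition $A_*A_*'-\Delta_*\Delta_*'-\kappa_*\kappa_*'=0$ is equivalent to $\partial_\omega F(\omega_*,p_*)=0$.
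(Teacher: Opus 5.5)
Your proposal is correct and follows essentially the same route as the paper's proof. Both identify the two displayed conditions with $F(\omega_*,p_*)=\partial_\omega F(\omega_*,p_*)=0$ (up to the factor $2$), use the second-derivative hypothesis to get a zero of order exactly two, deduce $\dim\ker\mathcal L(\omega_*,p_*)=1$ from $\mathcal L(\omega_*,p_*)$ being a nonzero singular $2\times 2$ matrix, and conclude via Lemma~\ref{lem:determinant-criterion}.
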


\begin{proof}
The proof of this statement is provided in Appendix~\ref{app:proof-balanced-dimer}.
\end{proof}

\subsection{A constant-threshold model}

The previous lemma allows both the detuning and the coupling to depend on the frequency. A particularly transparent limit is obtained by treating these quantities as constant over the frequency window of interest:
\begin{align*}
  \Delta(\omega,p)=\Delta_0,
  \qquad
  \kappa(\omega,p)=\kappa_0.
\end{align*}
Then,
\begin{align*}
  F(\omega)
  =
  A(\omega)^2-\Delta_0^2-\kappa_0^2.
\end{align*}
If $A'(\omega_*)\neq0$, the exceptional point conditions reduce to
\begin{align*}
  A(\omega_*)=0,
  \qquad
  \Delta_0^2+\kappa_0^2=0.
\end{align*}

\begin{corollary}[PT-type dimer threshold]
\label{cor:pt-threshold}
Consider the balanced dimer with constant coupling $\kappa_0\neq0$ and
constant imaginary detuning
\begin{align*}
  \Delta_0=i\Gamma,
  \qquad
  \Gamma\in\mathbb R.
\end{align*}
Assume that $A$ is holomorphic near $\omega_*$ and that
\begin{align*}
  A(\omega_*)=0,
  \qquad
  A'(\omega_*)\neq0.
\end{align*}
Then, $(\omega_*,\Gamma)$ is a second-order exceptional point if and only if
\begin{align*}
  \Gamma^2=\kappa_0^2.
\end{align*}
\end{corollary}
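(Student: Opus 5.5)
With $\Delta_0=i\Gamma$ we have $\Delta_0^2=-\Gamma^2$, so in the balanced dimer of Lemma~\ref{lem:balanced-dimer}
\begin{align*}
  F(\omega)=A(\omega)^2+\Gamma^2-\kappa_0^2 .
\end{align*}
The plan is to reduce the exceptional-point conditions to the single threshold equation $\Gamma^2=\kappa_0^2$ by direct computation, and then check the two nondegeneracy conditions. Both implications are read off $F$ and its first two derivatives at $\omega_*$.

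First, compute the derivatives. Since $\Delta_0$ and $\kappa_0$ are constant,
\begin{align*}
  \partial_\omega F=2AA',
  \qquad
  \partial_\omega^2F=2(A')^2+2AA''.
\end{align*}
At $\omega_*$ the hypothesis $A(\omega_*)=0$ gives three facts:
\begin{align*}
  \partial_\omega F(\omega_*)=0,
  \qquad
  F(\omega_*)=\Gamma^2-\kappa_0^2,
  \qquad
  \partial_\omega^2F(\omega_*)=2A'(\omega_*)^2\neq0.
\end{align*}
The last one uses $A'(\omega_*)\neq0$. So the double-root condition holds automatically, and the zero at $\omega_*$, if present, has order exactly two.

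For the direction ``if'': assume $\Gamma^2=\kappa_0^2$. Then $F(\omega_*)=0$, so $F$ has a zero of order exactly $2$ at $\omega_*$. For the geometric condition, observe that the off-diagonal entries of $\mathcal L(\omega_*)$ are $-\kappa_0\neq0$. Hence $\mathcal L(\omega_*)\neq0$, and a singular nonzero $2\times2$ matrix has a one-dimensional kernel. Lemma~\ref{lem:balanced-dimer}, or equivalently Lemma~\ref{lem:determinant-criterion} with $m=2$, then yields a second-order exceptional point.

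For the direction ``only if'': an exceptional point requires $\omega_*$ to be a characteristic value, so $F(\omega_*)=0$. By the computation above this forces $\Gamma^2-\kappa_0^2=0$.

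There is no real obstacle. The only point needing care is the role of $A'(\omega_*)\neq0$. It guarantees $\partial_\omega^2F(\omega_*)\neq0$, so the order is exactly two and not higher, and it is why no further constraint on $\Gamma$ arises beyond the threshold $\Gamma^2=\kappa_0^2$.
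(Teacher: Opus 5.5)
Your proposal is correct and matches the paper's proof: you write $F=A^2+\Gamma^2-\kappa_0^2$, note that $\partial_\omega F(\omega_*)=0$ and $\partial_\omega^2F(\omega_*)=2A'(\omega_*)^2\neq0$ hold automatically, reduce everything to $F(\omega_*)=\Gamma^2-\kappa_0^2=0$, and obtain the one-dimensional kernel because $\mathcal L(\omega_*)$ is singular and nonzero. Your version is slightly more careful than the paper's one-line argument, since you say explicitly that the off-diagonal entries $-\kappa_0\neq0$ make $\mathcal L(\omega_*)$ nonzero and you treat the ``only if'' direction separately.
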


\begin{proof}
The proof of this statement is provided in Appendix~\ref{app:proof-pt-threshold}.
\end{proof}

\begin{remark}
This corollary recovers the familiar balanced gain/loss threshold. However, the location of the exceptional point is determined by the nonlinear equation $A(\omega_*)=0$. In the present problem, $A$ contains the dispersive material law and the self-interaction contribution coming from the volume-potential reduction. Thus, the threshold relation has the same form as in the standard two-mode model, while the frequency $\omega_*$ and the local splitting constants are controlled by the highly dispersive resonance equation.
\end{remark}

\section{Perturbation and Sensitivity Near an Exceptional Point}
\label{sec:perturbation}

The determinant conditions locate an exceptional point. The next question is what they imply for nearby resonances. This is where the sensing motivation enters the analysis: a defective characteristic value is useful only through the way it unfolds under perturbations, and in the dispersive setting the unfolding constants depend on the material derivatives of the reduced pencil.

\subsection{Local determinant expansion}

We now study what happens when the system is perturbed away from an exceptional point. Let $(\omega_*,p_*)$ be an exceptional point of order $m$, and let
\begin{align*}
  p(\tau)=p_*+\tau h
\end{align*}
be a one-parameter perturbation of the geometry or material parameters. We set
\begin{align*}
  \Phi(\omega,\tau)
  :=
  F(\omega,p(\tau)).
\end{align*}
Since $\omega_*$ is an order-$m$ zero of $F(\cdot,p_*)$, the Taylor expansion of $\Phi$ near $(\omega_*,0)$ begins as
\begin{align*}
  \Phi(\omega,\tau)
  =
  a_m(\omega-\omega_*)^m
  +
  b\tau
  +
  \text{higher-order terms},
\end{align*}
where
\begin{align*}
  a_m
  =
  \frac{1}{m!}\partial_\omega^m F(\omega_*,p_*),
  \qquad
  b
  =
  \left.
  \frac{d}{d\tau}
  F(\omega_*,p_*+\tau h)
  \right|_{\tau=0}.
\end{align*}
The coefficient $a_m$ is nonzero by the exceptional-point condition. The coefficient $b$ measures whether the chosen perturbation unfolds the exceptional point at first order. This is the generic case, and it is the one which gives the usual fractional-power splitting.

\subsection{Puiseux splitting}

Puiseux splitting refers to the local branching of roots of an analytic characteristic equation in fractional powers of a perturbation parameter. This is the natural expansion near a defective characteristic value: instead of being analytic functions of the perturbation, the resonance branches are analytic in a fractional power of it. This is the finite-dimensional form of the analytic perturbation theory for operator and matrix pencils \cite{kato1966perturbation}.

\begin{theorem}[Generic splitting near an exceptional point]
\label{thm:generic-puiseux}
Let $(\omega_*,p_*)$ be an exceptional point of order $m$, let
$p(\tau)=p_*+\tau h$, and set
\begin{align*}
  \Phi(\omega,\tau)
  :=
  F(\omega,p(\tau)).
\end{align*}
Assume that $\Phi$ is holomorphic in $(\omega,\tau)$ near $(\omega_*,0)$ and that
\begin{align*}
  a_m
  :=
  \frac{1}{m!}\partial_\omega^mF(\omega_*,p_*)
  \neq0.
\end{align*}
Assume, moreover, that the perturbation is generic in the sense that
\begin{align*}
  b
  :=
  \left.
  \frac{d}{d\tau}
  F(\omega_*,p_*+\tau h)
  \right|_{\tau=0}
  \neq0.
\end{align*}
Let $c_1,\ldots,c_m$ be the $m$ roots of
\begin{align*}
  a_m c^m+b=0.
\end{align*}
Then, for sufficiently small nonzero $\tau$, the $m$ resonances near $\omega_*$ admit Puiseux expansions
\begin{align*}
  \omega_j(\tau)
  =
  \omega_*
  +
  c_j\tau^{1/m}
  +
  O(\tau^{2/m}),
  \qquad
  j=1,\ldots,m,
\end{align*}
after choosing a branch of $\tau^{1/m}$.
\end{theorem}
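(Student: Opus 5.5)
The plan is to reduce the statement to a Newton-polygon / Weierstrass preparation argument for the scalar holomorphic function $\Phi(\omega,\tau)$, and then rescale. First, since $\partial_\omega^k\Phi(\omega_*,0)=0$ for $k<m$ and $a_m\neq0$, the function $\omega\mapsto\Phi(\omega,0)$ has a zero of exact order $m$ at $\omega_*$. By the Weierstrass preparation theorem (or directly by Rouché's theorem on a small circle $|\omega-\omega_*|=r$ with $\omega_*$ the only zero of $\Phi(\cdot,0)$ in the closed disc), for $|\tau|$ small there are exactly $m$ zeros $\omega_1(\tau),\ldots,\omega_m(\tau)$ in $B(\omega_*,r)$, counted with multiplicity, and they tend to $\omega_*$ as $\tau\to0$. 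This is the same counting argument as in Corollary~\ref{cor:exceptional-cluster}, with $\tau$ playing the role of $\delta$.

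Next I will write the full double Taylor expansion
\begin{align*}
  \Phi(\omega,\tau)=\sum_{k,l\geq0}\phi_{kl}(\omega-\omega_*)^k\tau^l,
  \qquad \phi_{k0}=0\ (k<m),\quad \phi_{m0}=a_m,\quad \phi_{01}=b,
\end{align*}
and substitute the scaling $\omega=\omega_*+\tau^{1/m}c$ for a fixed branch of $\tau^{1/m}$, writing $s=\tau^{1/m}$. Then
\begin{align*}
  \Phi(\omega_*+sc,s^m)=s^m\bigl(a_mc^m+b+s\,R(c,s)\bigr),
\end{align*}
where $R$ is holomorphic in $(c,s)$ near $\{|c|\leq C\}\times\{0\}$. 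The point to check is that every remaining monomial $\phi_{kl}s^{k+ml}c^k$ with $(k,l)\notin\{(m,0),(0,1)\}$ and $\phi_{kl}$ possibly nonzero has $k+ml\geq m+1$: for $l=0$ we need $k\geq m+1$ (true since $\phi_{k0}=0$ for $k<m$), and for $l\geq1$ we need $k+ml>m$, which fails only for $(k,l)=(0,1)$. This is exactly where the genericity $b\neq0$ enters: it makes the Newton polygon a single segment of slope $1/m$.

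Then define $G(c,s):=a_mc^m+b+sR(c,s)$. Because $a_m\neq0$ and $b\neq0$, the polynomial $a_mc^m+b$ has $m$ distinct simple roots $c_1,\ldots,c_m$, each nonzero, with $\partial_cG(c_j,0)=ma_mc_j^{m-1}\neq0$. The holomorphic implicit function theorem gives holomorphic $c_j(s)$ with $c_j(0)=c_j$ and $G(c_j(s),s)=0$, so $c_j(s)=c_j+O(s)$. Setting $\omega_j(\tau)=\omega_*+sc_j(s)$ yields $m$ distinct zeros of $\Phi(\cdot,\tau)$ for small $s\neq0$, with $\omega_j(\tau)=\omega_*+c_j\tau^{1/m}+O(\tau^{2/m})$.

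The main obstacle is to ensure these are \emph{all} the resonances near $\omega_*$, i.e.\ that no zero escapes the scaling ansatz. I will handle this by combining the two counts: the implicit-function branches produce $m$ distinct zeros, all lying in $B(\omega_*,r)$ for small $\tau$, and the Rouché count shows there are exactly $m$ zeros there with multiplicity; hence they coincide and each is simple for $\tau\neq0$. (Alternatively, one can show a priori that any zero satisfies $|\omega-\omega_*|=O(|\tau|^{1/m})$ by comparing $|a_m(\omega-\omega_*)^m|$ with $|b\tau|$ and the remainder, but the counting argument avoids this.) Changing the branch of $\tau^{1/m}$ only permutes the labels $c_j$, which explains the phrase ``after choosing a branch.''
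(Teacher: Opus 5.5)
Your proposal is correct and follows essentially the same route as the paper: rescale $\omega=\omega_*+\tau^{1/m}z$, factor out $\tau$ to reach the limiting polynomial $a_m z^m+b$ with $m$ simple nonzero roots, and perturb those roots. Your monomial bookkeeping for the remainder, and your combination of implicit-function branches with a Rouch\'e count on a fixed disc, make explicit the step the paper only asserts, namely that every zero of $\Phi(\cdot,\tau)$ near $\omega_*$ arises from one of the rescaled roots.
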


\begin{proof}
The proof of this statement is provided in Appendix~\ref{app:proof-puiseux}.
\end{proof}

\begin{corollary}[Square-root splitting for a dimer]
\label{cor:square-root-dimer}
Let $(\omega_*,p_*)$ be a second-order exceptional point of the reduced dimer,
and assume that the perturbation direction $h$ satisfies
\begin{align*}
  b
  =
  \left.
  \frac{d}{d\tau}
  F(\omega_*,p_*+\tau h)
  \right|_{\tau=0}
  \neq0.
\end{align*}
Then, the two resonances near $\omega_*$ satisfy
\begin{align*}
  \omega_\pm(\tau)
  =
  \omega_*
  \pm
  \left(
    -
    \frac{2b}{\partial_\omega^2F(\omega_*,p_*)}
  \right)^{1/2}
  \tau^{1/2}
  +
  O(\tau).
\end{align*}
\end{corollary}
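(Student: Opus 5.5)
This corollary is the case $m=2$ of Theorem~\ref{thm:generic-puiseux}, so the plan is to specialize that theorem rather than redo the local analysis. A second-order exceptional point has $\alg_{\omega_*}\mathcal L(\cdot,p_*)=2$, so $F(\cdot,p_*)$ has a zero of exact order two at $\omega_*$. Hence
\begin{align*}
  a_2=\tfrac12\partial_\omega^2F(\omega_*,p_*)\neq0 .
\end{align*}
The genericity assumption $b\neq0$ is exactly the hypothesis of the corollary. The remaining hypothesis of Theorem~\ref{thm:generic-puiseux} is joint holomorphy of $\Phi(\omega,\tau)=F(\omega,p_*+\tau h)$ near $(\omega_*,0)$. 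I will take this in the same standing sense as in that theorem: the dimer entries are holomorphic in $\omega$ and depend smoothly (analytically, along the line $p_*+\tau h$) on the parameters. I will state this explicitly as the interpretation of the hypotheses.

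Next I solve the model equation $a_2c^2+b=0$. Its two roots are
\begin{align*}
  c_\pm=\pm\left(-\frac{b}{a_2}\right)^{1/2}=\pm\left(-\frac{2b}{\partial_\omega^2F(\omega_*,p_*)}\right)^{1/2}.
\end{align*}
Since $b\neq0$, the two roots are distinct and opposite. Substituting into the Puiseux expansion of Theorem~\ref{thm:generic-puiseux} with $m=2$ gives
\begin{align*}
  \omega_\pm(\tau)=\omega_*+c_\pm\tau^{1/2}+O(\tau^{2/2}).
\end{align*}
Since $\tau^{2/m}=\tau$ when $m=2$, this is exactly the claimed formula with remainder $O(\tau)$. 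The two branches correspond to the two determinations of the square root once a branch of $\tau^{1/2}$ is fixed.

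The only point needing care, and so the main (mild) obstacle, is the branch bookkeeping. Changing the branch of $\tau^{1/2}$, or of the square root defining $c_\pm$, merely swaps $\omega_+$ and $\omega_-$. So the unordered pair of resonances is well defined, and the $\pm$ notation is consistent.

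To make the derivation self-contained I would also sketch why the remainder is $O(\tau)$. Set $\omega=\omega_*+s\,\tau^{1/2}$ and divide $\Phi$ by $\tau$. The rescaled equation is
\begin{align*}
  a_2s^2+b+O(\tau^{1/2})=0,
\end{align*}
whose leading part has simple roots $s=c_\pm$. The implicit function theorem in the variable $\tau^{1/2}$ then gives $s=c_\pm+O(\tau^{1/2})$, which is equivalent to the stated expansion. Finally, Rouché's theorem applied to $\Phi(\cdot,\tau)$ and $a_2(\omega-\omega_*)^2$ on a small disc shows that these are the only two resonances near $\omega_*$.
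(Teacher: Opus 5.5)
Your proposal is correct and follows the paper's proof, which is exactly the specialization of Theorem~\ref{thm:generic-puiseux} to $m=2$ with $a_2=\tfrac12\partial_\omega^2F(\omega_*,p_*)$. The roots of $a_2c^2+b=0$ then give the stated prefactor, and $O(\tau^{2/m})=O(\tau)$. Your extra rescaling-plus-implicit-function argument in $\tau^{1/2}$ and the Rouch\'e count re-derive that theorem's content in this case, and they usefully make explicit the joint holomorphy of $\Phi$ that the corollary leaves implicit.
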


\begin{proof}
The proof of this statement is provided in Appendix~\ref{app:proof-square-root}.
\end{proof}

\subsection{Sensitivity prefactor}

The coefficients in the Puiseux expansion can be written directly in terms of the reduced matrix. This makes the role of the material derivatives explicit.

\begin{lemma}[Puiseux prefactor and material derivatives]
\label{lem:prefactor-material-derivatives}
Let $(\omega_*,p_*)$ be a second-order exceptional point of the reduced
matrix, and let $p(\tau)=p_*+\tau h$ be a generic perturbation. Then,
\begin{align*}
  \omega_\pm(\tau)
  =
  \omega_*
  \pm
  \mathfrak c\,\tau^{1/2}
  +
  O(\tau),
\end{align*}
where
\begin{align*}
  \mathfrak c^2
  =
  -
  \frac{
    2\tr\bigl(
      \adj\mathcal L(\omega_*,p_*)
      D_p\mathcal L(\omega_*,p_*)[h]
    \bigr)
  }{
    \partial_\omega^2F(\omega_*,p_*)
  }.
\end{align*}
Moreover,
\begin{align*}
  \partial_\omega F(\omega,p)
  =
  \tr\bigl(
    \adj\mathcal L(\omega,p)
    \partial_\omega\mathcal L(\omega,p)
  \bigr),
\end{align*}
so the denominator $\partial_\omega^2F(\omega_*,p_*)$ contains the frequency derivatives of the material contrasts, including terms involving $\partial_\omega\xi_i$ and $\partial_\omega^2\xi_i$ whenever these appear in the entries of $\mathcal L$.
\end{lemma}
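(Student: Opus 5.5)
The plan is to deduce the lemma directly from Corollary~\ref{cor:square-root-dimer}, or more generally from Theorem~\ref{thm:generic-puiseux} with $m=2$, by rewriting the coefficient $b$ through Jacobi's formula. The expansion
\begin{align*}
  \omega_\pm(\tau)=\omega_*\pm\mathfrak c\,\tau^{1/2}+O(\tau),
  \qquad
  \mathfrak c^2=-\frac{2b}{\partial_\omega^2F(\omega_*,p_*)},
\end{align*}
already follows from that corollary. It uses $a_2=\tfrac12\partial_\omega^2F(\omega_*,p_*)\neq0$ from Definition~\ref{def:reduced-exceptional-point}, and genericity means exactly $b\neq0$. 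The corollary is stated for dimers, but its proof only uses $m=2$. If needed, I will apply Theorem~\ref{thm:generic-puiseux} with $m=2$: the roots of $a_2c^2+b=0$ are $c=\pm(-b/a_2)^{1/2}$, and $-b/a_2=-2b/\partial_\omega^2F$. The error term $O(\tau^{2/m})$ becomes $O(\tau)$.

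The key step is to identify $b$. I will set $M(\tau)=\mathcal L(\omega_*,p_*+\tau h)$, which is differentiable in $\tau$ because $\mathcal L$ is $C^1$ (or holomorphic) in $p$. Jacobi's formula $\frac{d}{d\tau}\det M=\tr(\adj M\,M')$ then gives
\begin{align*}
  b=\frac{d}{d\tau}\det M(\tau)\Big|_{\tau=0}
  =\tr\bigl(\adj\mathcal L(\omega_*,p_*)\,D_p\mathcal L(\omega_*,p_*)[h]\bigr),
\end{align*}
using the chain rule $M'(0)=D_p\mathcal L(\omega_*,p_*)[h]$. Substituting this into the expression for $\mathfrak c^2$ yields the stated formula. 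Note that $\adj\mathcal L(\omega_*,p_*)$ is a nonzero rank-one matrix, since $\dim\ker\mathcal L(\omega_*,p_*)=1$. I will not need this fact, but it shows the trace is not trivially zero.

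For the \emph{moreover} part, I will apply Jacobi's formula in $\omega$ at fixed $p$, as already recorded in Section~\ref{subsec:dispersion-ep-conditions}. Differentiating once more gives
\begin{align*}
  \partial_\omega^2F=\tr\bigl(\partial_\omega(\adj\mathcal L)\,\partial_\omega\mathcal L\bigr)+\tr\bigl(\adj\mathcal L\,\partial_\omega^2\mathcal L\bigr).
\end{align*}
The entries of $\mathcal L$ depend on $\omega$ through $\xi_i$ (Lemma~\ref{lem:projected-system}, via the $B_i$), so by the chain rule $\partial_\omega\mathcal L$ contains $\partial_\omega\xi_i$, and $\partial_\omega^2\mathcal L$ contains both $\partial_\omega^2\xi_i$ and $(\partial_\omega\xi_i)^2$.

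I do not expect a genuine obstacle. The only point needing care is the branch bookkeeping: the sign $\pm$ together with the choice of $\tau^{1/2}$ must label the two roots consistently, and the $O(\tau)$ remainder must be inherited from the Puiseux theorem rather than re-derived.
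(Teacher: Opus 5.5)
Your proposal is correct and matches the paper's proof: you obtain $\mathfrak c^2=-2b/\partial_\omega^2F(\omega_*,p_*)$ from the $m=2$ Puiseux result, identify $b$ by Jacobi's formula along the parameter direction $h$, and get the ``moreover'' part by applying Jacobi's formula in $\omega$ and differentiating once more. The paper cites the dimer-specific Corollary~\ref{cor:square-root-dimer}, whereas the lemma concerns a general reduced matrix, so your choice to invoke Theorem~\ref{thm:generic-puiseux} with $m=2$ directly is slightly more careful.
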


\begin{proof}
The proof of this statement is provided in Appendix~\ref{app:proof-prefactor-material}.
\end{proof}

The splitting law gives the sensitivity estimate
\begin{align*}
  \max_{j,k}
  |\omega_j(\tau)-\omega_k(\tau)|
  =
  O(|\tau|^{1/m}).
\end{align*}
For a generic perturbation the leading coefficient is nonzero, so the scaling is of order $|\tau|^{1/m}$. More explicitly, if $\{c_1,\ldots,c_m\}$ are the $m$ roots of $a_m c^m+b=0$, then,
\begin{align*}
  \max_{j,k}
  |\omega_j(\tau)-\omega_k(\tau)|
  =
  \left(
    \max_{j,k}|c_j-c_k|
  \right)
  |\tau|^{1/m}
  +
  O(|\tau|^{2/m}).
\end{align*}

\begin{remark}
The exponent $1/m$ is the usual exceptional-point sensitivity exponent. The new feature here is the structure of the prefactor. Both $a_m$ and $b$ are derivatives of the nonlinear determinant $F(\omega,p)=\det\mathcal L(\omega,p)$. Hence they contain derivatives of the dispersive material law, the local self-interaction terms and the volume-mediated coupling coefficients. Highly dispersive resonators therefore do not only reproduce the usual exceptional-point splitting law; they also provide material and geometric parameters which can tune its prefactor.
\end{remark}

\begin{remark}
Theorem~\ref{thm:generic-puiseux} is stated for the leading reduced determinant. For the full projected determinant $F_\delta=\det(\mathcal L+\mathcal E_\delta)$, the same expansion describes the leading behaviour whenever the perturbation scale is resolved above the reduction error. In particular, the resonance cluster of Corollary~\ref{cor:exceptional-cluster} is shifted by the finite-size error, whereas the fractional exponent is the local exponent of the limiting exceptional point.
\end{remark}

\section{Higher-Order Systems}
\label{sec:higher-order}

The same mechanism is not restricted to dimers. Once exceptional points are formulated through the order of a zero of the nonlinear determinant and the dimension of the nullspace, higher-order degeneracies can be treated by the same principle. The array supplies the material and geometric parameters needed to impose the higher-order conditions.

\subsection{Higher-order degeneracy equations}

For systems with three or more resonators, the same determinant conditions \eqref{eq:ep-derivative-conditions}--\eqref{eq:ep-nondegeneracy-conditions} give a direct way to formulate higher-order exceptional points. Equivalently, one solves the nonlinear degeneracy system
\begin{align*}
  \mathcal E_m(\omega,p)
  :=
  \left(
    F(\omega,p),
    \partial_\omega F(\omega,p),
    \ldots,
    \partial_\omega^{m-1}F(\omega,p)
  \right)
  =
  0,
\end{align*}
and subsequently verifies the non-degeneracy and geometric conditions in \eqref{eq:ep-nondegeneracy-conditions}.

\begin{lemma}[Higher-order reduced exceptional point criterion]
\label{lem:higher-order-criterion}
Assume that $\mathcal L(\cdot,p_*)$ is holomorphic near $\omega_*$ and that
$F(\cdot,p_*)$ has a zero of order $m$ at $\omega_*$. If
\begin{align*}
  \dim\ker\mathcal L(\omega_*,p_*)=1,
\end{align*}
then, $(\omega_*,p_*)$ is an exceptional point of order $m$ of the reduced $N$-resonator system.
\end{lemma}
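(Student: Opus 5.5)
This statement is essentially a restatement of Definition~\ref{def:reduced-exceptional-point} for a general number $N$ of resonators, so the plan is to verify the two defining conditions directly. By the definition of algebraic multiplicity given above, that $F(\cdot,p_*)=\det\mathcal L(\cdot,p_*)$ has a zero of order $m$ at $\omega_*$ means
\begin{align*}
  \alg_{\omega_*}\mathcal L(\cdot,p_*)=m.
\end{align*}
The hypothesis $\dim\ker\mathcal L(\omega_*,p_*)=1$ is, by the definition of geometric multiplicity, exactly
\begin{align*}
  \geo_{\omega_*}\mathcal L(\cdot,p_*)=1.
\end{align*}

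The first step is to check that $\omega_*$ is a characteristic value in the sense of Definition~\ref{def:characteristic-value}. If $m\geq1$, this follows from $F(\omega_*,p_*)=0$. It also follows independently from the nontrivial kernel. These two readings are consistent, since for a square matrix $\det\mathcal L(\omega_*,p_*)=0$ if and only if $\ker\mathcal L(\omega_*,p_*)\neq\{0\}$.

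The second step is to confirm that the order of the zero is well defined. Holomorphy of $\mathcal L(\cdot,p_*)$ near $\omega_*$ makes $F(\cdot,p_*)$ holomorphic, because $F$ is a polynomial in the entries. A finite order $m$ then corresponds to the standing regularity assumption that $F(\cdot,p_*)\not\equiv0$. In this situation the derivative conditions \eqref{eq:ep-derivative-conditions} hold, and $\partial_\omega^mF(\omega_*,p_*)\neq0$ gives the first half of \eqref{eq:ep-nondegeneracy-conditions}. The kernel hypothesis supplies the second half. Definition~\ref{def:reduced-exceptional-point} then applies verbatim.

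The argument is the same as for Lemma~\ref{lem:determinant-criterion}; nothing in that argument used $N=2$. There is no substantive obstacle. The only point that needs care is interpretive: $m\geq1$ is implicitly required. I would also note, as a remark, that the algebraic multiplicity here is always at least the geometric multiplicity. This can be seen from a Smith local form or Gohberg--Sigal factorization, but it is not needed for the proof.
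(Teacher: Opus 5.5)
Your proposal is correct and follows the paper's proof, which likewise identifies the order of the zero of $F$ with the algebraic multiplicity and the kernel dimension with the geometric multiplicity, and then invokes Lemma~\ref{lem:determinant-criterion}/Definition~\ref{def:reduced-exceptional-point} for general $N$. One small point: $m\geq1$ is not an additional interpretive requirement, since $\dim\ker\mathcal L(\omega_*,p_*)=1$ already forces $F(\omega_*,p_*)=0$, as your own consistency remark shows.
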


\begin{proof}
The proof of this statement is provided in Appendix~\ref{app:proof-higher-order}.
\end{proof}

\subsection{Trimers and third-order exceptional points}

The first genuinely higher-order case is a trimer. A third-order exceptional point is characterized by
\begin{align*}
  F(\omega_*,p_*)=0,
  \qquad
  \partial_\omega F(\omega_*,p_*)=0,
  \qquad
  \partial_\omega^2F(\omega_*,p_*)=0,
\end{align*}
with
\begin{align*}
  \partial_\omega^3F(\omega_*,p_*)\neq0,
  \qquad
  \dim\ker\mathcal L_3(\omega_*,p_*)=1.
\end{align*}
Here $\mathcal L_3$ denotes the $3\times3$ reduced matrix. A convenient abstract form is
\begin{align*}
  \mathcal L_3(\omega,p)
  =
  A(\omega,p)I-H(\omega,p),
\end{align*}
where $A$ is a common dispersive local response and $H$ is an effective coupling and detuning matrix. In this representation, a third-order exceptional point corresponds to $A(\omega_*,p_*)$ being a defective eigenvalue of $H(\omega_*,p_*)$ with algebraic multiplicity three and geometric multiplicity one, together with the nonlinear frequency degeneracy encoded by the determinant conditions above.

The following normal-form result gives an explicit third-order exceptional point for a nonlinear dispersive trimer pencil.

\begin{lemma}[Third-order trimer exceptional point]
\label{lem:trimer-third-order}
Let $A$ be holomorphic near $\omega_*$ and assume
\begin{align*}
  A(\omega_*)=0,
  \qquad
  A'(\omega_*)\neq0.
\end{align*}
Let $\kappa\neq0$ and consider the trimer pencil
\begin{align*}
  \mathcal L_3(\omega,\tau)
  =
  \begin{pmatrix}
    A(\omega) & -\kappa & 0\\
    0 & A(\omega) & -\kappa\\
    -\tau & 0 & A(\omega)
  \end{pmatrix}.
\end{align*}
Then, $(\omega_*,0)$ is a third-order exceptional point of the reduced trimer system.
\end{lemma}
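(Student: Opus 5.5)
The plan is to reduce the statement to the higher-order criterion, Lemma~\ref{lem:higher-order-criterion}, with $m=3$ and $p_*=\tau=0$. Two facts are needed: that $F(\cdot,0)=\det\mathcal L_3(\cdot,0)$ has a zero of order exactly three at $\omega_*$, and that $\dim\ker\mathcal L_3(\omega_*,0)=1$.

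\emph{Step 1: the determinant.} First I expand $\det\mathcal L_3(\omega,\tau)$ along the first column. This gives
\begin{align*}
  F(\omega,\tau)
  =
  A(\omega)^3
  -
  \tau
  \det\begin{pmatrix}-\kappa & 0\\ A(\omega) & -\kappa\end{pmatrix}
  =
  A(\omega)^3-\tau\kappa^2 .
\end{align*}
At $\tau=0$ this is $F(\omega,0)=A(\omega)^3$. Since $A(\omega_*)=0$ and $A'(\omega_*)\neq0$, we can write $A(\omega)=A'(\omega_*)(\omega-\omega_*)+O((\omega-\omega_*)^2)$. Hence
\begin{align*}
  F(\omega,0)=A'(\omega_*)^3(\omega-\omega_*)^3+O\bigl((\omega-\omega_*)^4\bigr).
\end{align*}
The zero is therefore of order exactly three. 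In derivative form, $F=\partial_\omega F=\partial_\omega^2F=0$ at $(\omega_*,0)$ and $\partial_\omega^3F(\omega_*,0)=6A'(\omega_*)^3\neq0$.

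\emph{Step 2: the kernel.} At $(\omega_*,0)$ the matrix is
\begin{align*}
  \mathcal L_3(\omega_*,0)=\begin{pmatrix}0&-\kappa&0\\0&0&-\kappa\\0&0&0\end{pmatrix}.
\end{align*}
This is $-\kappa$ times the nilpotent Jordan block. Because $\kappa\neq0$, rows one and two are linearly independent, so the rank is $2$ and the kernel is $\operatorname{span}\{e_1\}$, of dimension one. Lemma~\ref{lem:higher-order-criterion}, or equivalently Definition~\ref{def:reduced-exceptional-point}, then gives the claim.

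I expect no real obstacle here; the computation is short. The one point to watch is that the order of the zero is \emph{exactly} three. This relies on $A'(\omega_*)\neq0$, which excludes higher-order vanishing of $A^3$. As a consistency check with Theorem~\ref{thm:generic-puiseux}, I may note that $b=\partial_\tau F(\omega_*,0)=-\kappa^2\neq0$. The perturbation $\tau$ is therefore generic, and the resonances split like $\tau^{1/3}$, as expected for a third-order exceptional point.
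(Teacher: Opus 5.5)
Your proof is correct and follows essentially the same route as the paper: you compute $\det\mathcal L_3=A^3-\kappa^2\tau$ (expanding along the first column rather than the first row, which is immaterial), and you deduce an exact third-order zero from $A'(\omega_*)\neq0$. You then note that the kernel of $\mathcal L_3(\omega_*,0)$ is spanned by $(1,0,0)^\top$ and invoke Lemma~\ref{lem:higher-order-criterion}.
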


\begin{proof}
The proof of this statement is provided in Appendix~\ref{app:proof-trimer}.
\end{proof}

For $\tau\neq0$, the same normal form gives the local unfolding explicitly:
\begin{align*}
  A(\omega)^3=\kappa^2\tau.
\end{align*}
Thus, the three nearby resonances satisfy a cube-root splitting law $|\omega_j(\tau)-\omega_*|=O(|\tau|^{1/3})$, in agreement with Theorem~\ref{thm:generic-puiseux} for $m=3$.

\subsection{Volume-interaction arrays}

The volume-interaction form of the reduced matrix supplies a natural set of parameters for higher-order construction. In the constant-mode approximation, one may write schematically
\begin{align*}
  \mathcal L(\omega,p)
  =
  I-\operatorname{diag}
  \bigl(
    B_1(\omega,p),\ldots,B_N(\omega,p)
  \bigr)
  C(\omega,p),
\end{align*}
where the factors $B_i$ encode the local dispersive response and $C$ contains the projected volume-mediated interactions. Material tuning primarily changes the diagonal factors $B_i$, while geometric tuning changes both the interaction matrix $C$ and, through the component shapes and sizes, the local resonant factors.

This structure is useful for higher-order exceptional points. In non-dispersive matrix models, the number of available parameters is often limited to coupling strengths and gain/loss coefficients. Here the nonlinear dispersive law creates additional frequency-dependent degrees of freedom through $B_i(\omega,p)$. Thus, higher-order degeneracies can be sought by combining material detuning, damping, gain/loss and volume-mediated coupling.

\begin{remark}
The higher-order theory is formulated at the level of the reduced nonlinear matrix. The passage from the full scattering problem to this matrix is supplied by the subwavelength volume-potential reduction. The numerical experiments below therefore focus on locating and testing higher-order exceptional points of $\mathcal L(\omega,p)$, together with the Puiseux splitting predicted by the perturbation theory.
\end{remark}

\section{Numerical Experiments}
\label{sec:numerics}

Let us now test the reduced theory at the level of the nonlinear matrix pencil. The numerical examples are organized to follow the logic of the paper: the determinant criterion for dimers, passive generation without imposed gain, a higher-order trimer, inverse design, and finally explicit halide-perovskite-inspired material laws.

The computations solve the reduced determinant equations introduced in Sections~\ref{sec:nonlinear-pencils} and~\ref{sec:higher-order}. For each candidate exceptional point, we check the relevant non-degeneracy condition and the dimension of the numerical nullspace. The numerical parameters, residuals, fitted exponents and output files used to generate the figures are recorded in Appendix~\ref{app:numerical-reproducibility}.

\subsection{Frequency-dependent volume-interaction dimer}

We start by considering a balanced dimer with frequency-dependent volume-mediated coupling,
\begin{align*}
  \mathcal L(\omega,\Gamma)
  =
  \begin{pmatrix}
    A(\omega)+i\Gamma & -\kappa(\omega)\\
    -\kappa(\omega) & A(\omega)-i\Gamma
  \end{pmatrix}.
\end{align*}
Here $A$ is a nonlinear dispersive local response and $\kappa(\omega)$ contains the projected Helmholtz Green interaction. The parameter $\Gamma$ is tuned so that the determinant has a double zero. This example is the numerical counterpart of the dimer criterion in Section~\ref{sec:dimers}, with the nonlinear frequency dependence described in Lemma~\ref{lem:dispersion-nonlinear-pencil}.

\begin{figure}[htbp]
  \centering
  \begin{subfigure}[t]{0.32\textwidth}
    \centering
    \includegraphics[width=\textwidth]{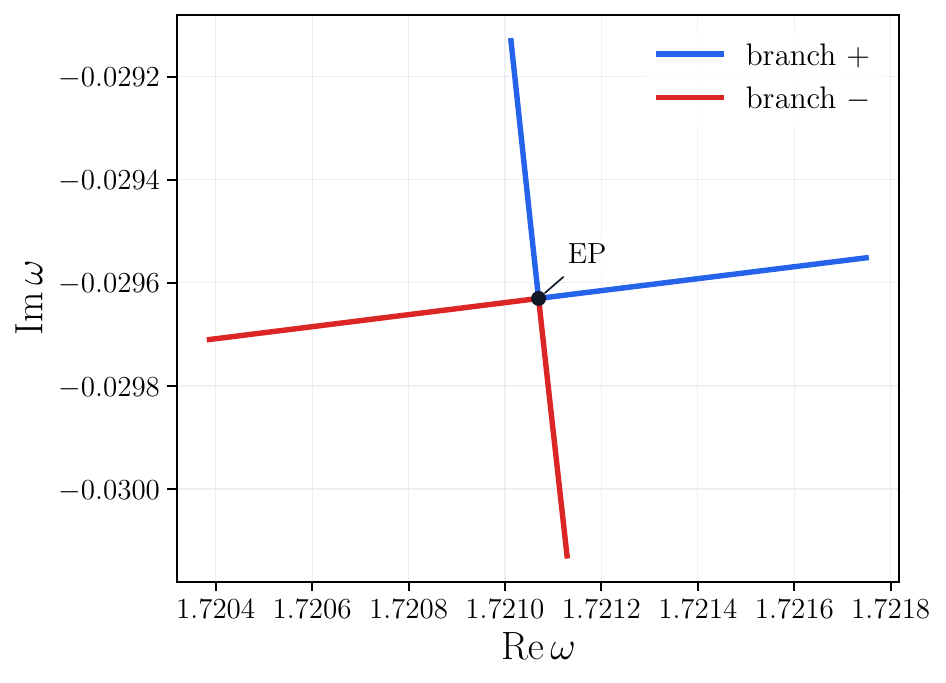}
    \caption{Resonance trajectories.}
    \label{fig:volume-dimer-trajectories}
  \end{subfigure}
  \hfill
  \begin{subfigure}[t]{0.32\textwidth}
    \centering
    \includegraphics[width=\textwidth]{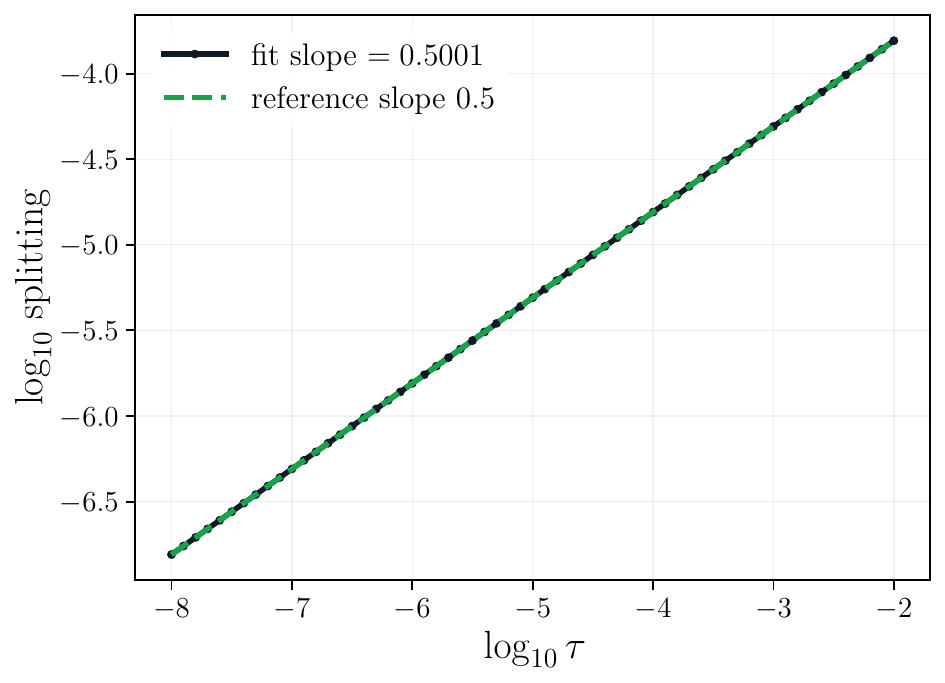}
    \caption{Square-root splitting.}
    \label{fig:volume-dimer-scaling}
  \end{subfigure}
  \hfill
  \begin{subfigure}[t]{0.32\textwidth}
    \centering
    \includegraphics[width=\textwidth]{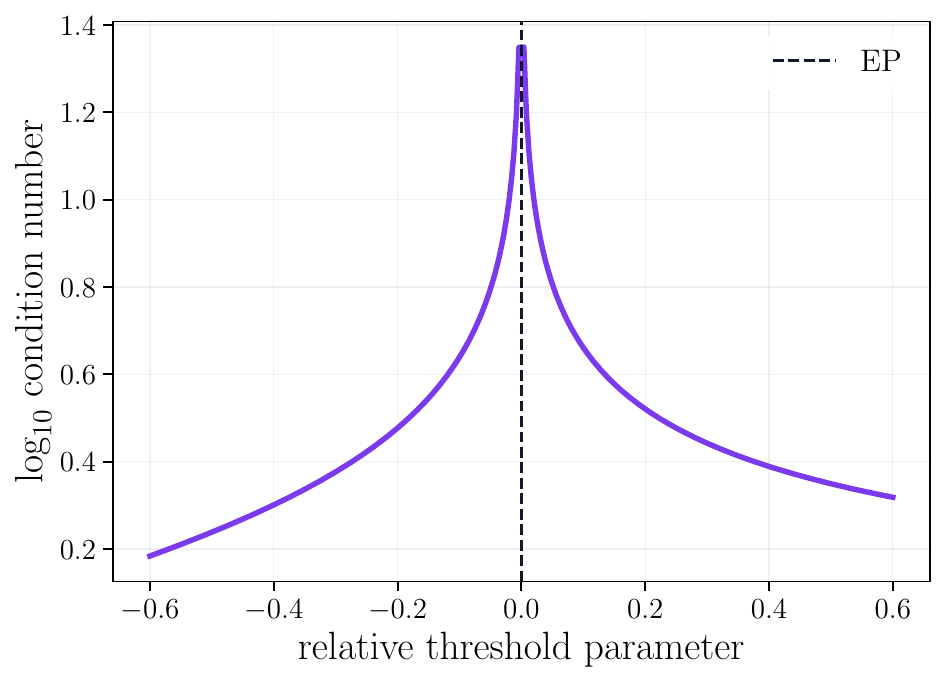}
    \caption{Modal condition number.}
    \label{fig:volume-dimer-condition}
  \end{subfigure}
  \caption{Frequency-dependent volume-interaction dimer. The panels show the resonance trajectories, the local splitting law and the modal condition number near the exceptional point.}
  \label{fig:volume-dimer}
\end{figure}

Figure~\ref{fig:volume-dimer} shows the three diagnostics used throughout the numerical section. The two resonance branches meet at the computed exceptional point, and the associated reduced eigenvectors become nearly parallel, as shown by the growth of the modal condition number. The log-log splitting plot has fitted slope close to $1/2$, which is consistent with Corollary~\ref{cor:square-root-dimer}. Thus, the computation illustrates both parts of the second-order criterion: the determinant has a double zero and the geometric multiplicity is one.

\subsection{Passive exceptional point without gain}

The previous example uses a balanced gain/loss parameter. We next give an example in which an exceptional point is generated by passive material damping and material detuning. We use the reduced matrix
\begin{align*}
  \mathcal L(\omega)
  =
  \begin{pmatrix}
    A_1(\omega) & -\kappa(\omega)\\
    -\kappa(\omega) & A_2(\omega)
  \end{pmatrix},
\end{align*}
where both resonators have positive damping coefficients. With $\gamma_1>0$, we tune the second resonator by changing its material parameters. The computed exceptional point has $\gamma_2>0$, so the non-Hermiticity is produced by loss and detuning, rather than by explicit gain.

\begin{figure}[htbp]
  \centering
  \begin{subfigure}[t]{0.32\textwidth}
    \centering
    \includegraphics[width=\textwidth]{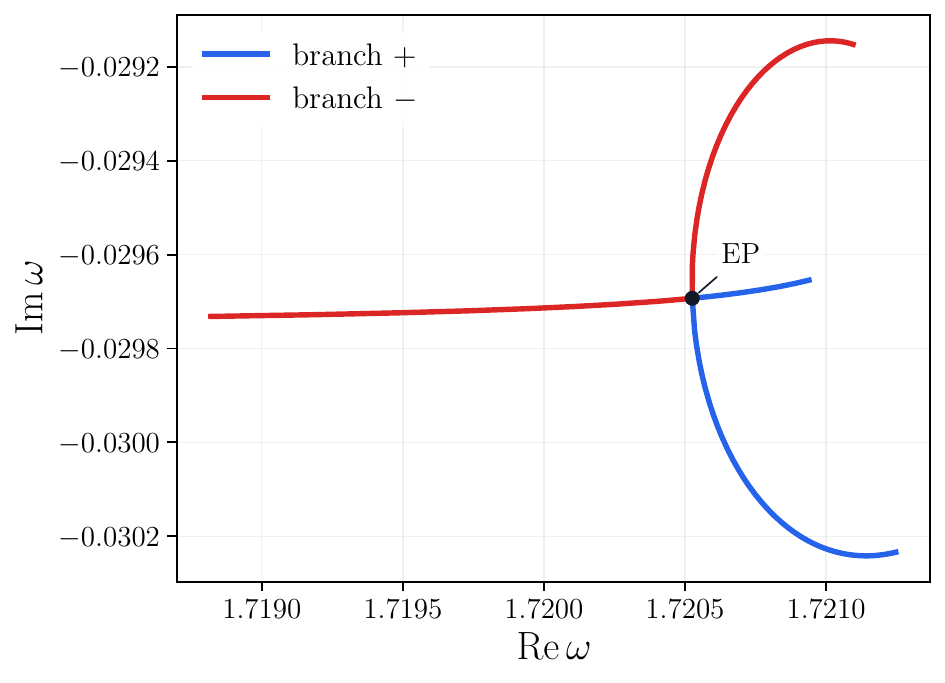}
    \caption{Resonance trajectories.}
    \label{fig:passive-dimer-trajectories}
  \end{subfigure}
  \hfill
  \begin{subfigure}[t]{0.32\textwidth}
    \centering
    \includegraphics[width=\textwidth]{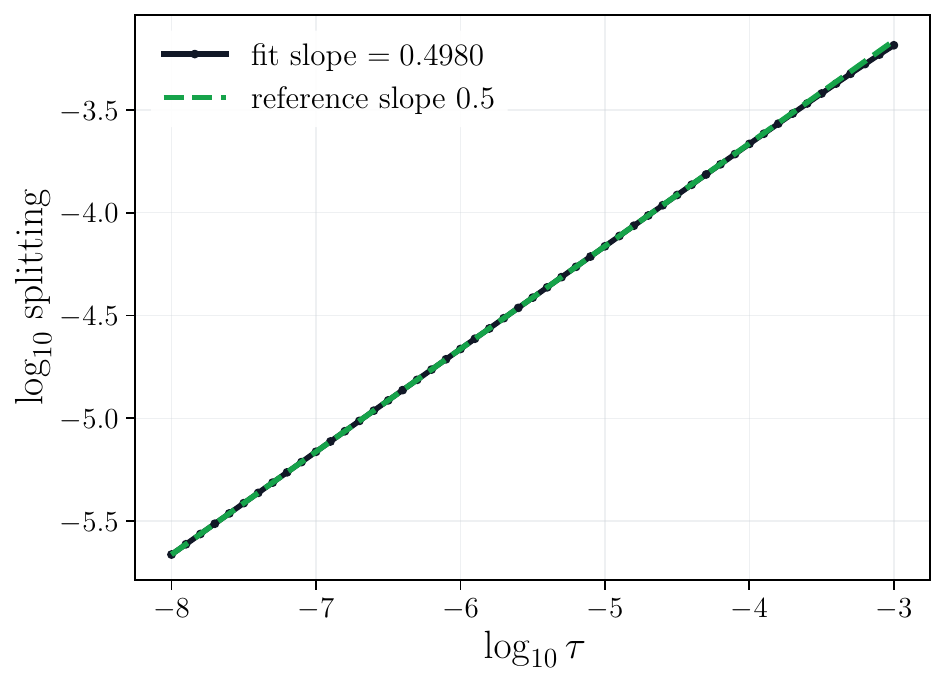}
    \caption{Square-root splitting.}
    \label{fig:passive-dimer-scaling}
  \end{subfigure}
  \hfill
  \begin{subfigure}[t]{0.32\textwidth}
    \centering
    \includegraphics[width=\textwidth]{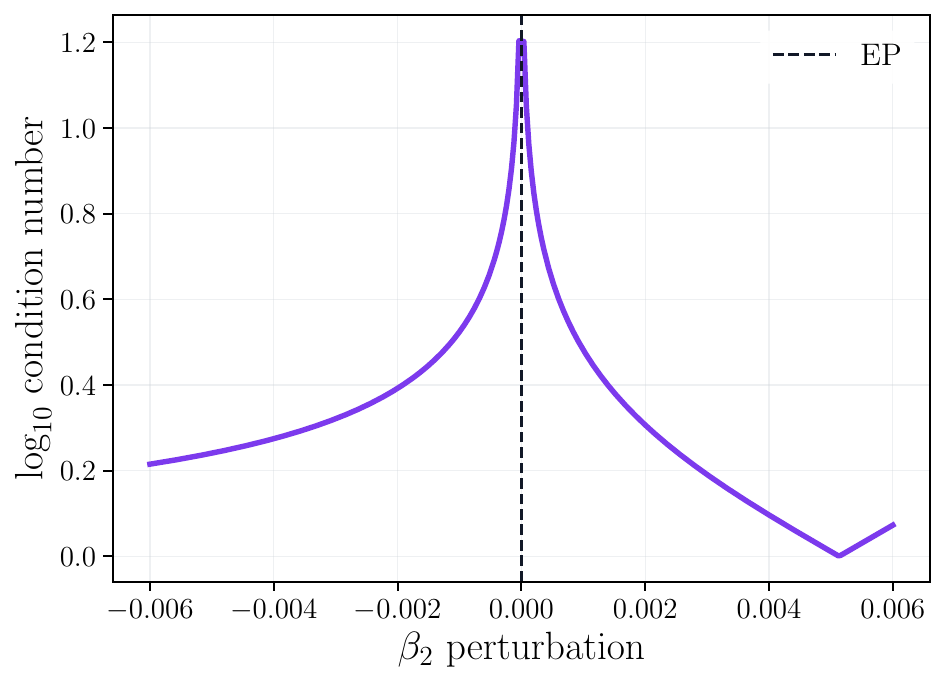}
    \caption{Modal condition number.}
    \label{fig:passive-dimer-condition}
  \end{subfigure}
  \caption{Passive damped dimer without gain. The panels show coalescence of the two resonance branches, square-root splitting under detuning and growth of the modal condition number near the exceptional point.}
  \label{fig:passive-dimer}
\end{figure}

Figure~\ref{fig:passive-dimer} shows that the same second-order behaviour can be obtained without imposing a balanced gain/loss parameter. The resonance branches coalesce, the modal basis becomes ill-conditioned near the coalescence, and the measured splitting exponent is again close to $1/2$. This agrees with the determinant criterion of Lemma~\ref{lem:dimer-criterion} and shows, at the level of the reduced model, that passive damping and material detuning can provide the required non-Hermiticity.

\subsection{Third-order exceptional point in a trimer}

We next test the higher-order construction of Lemma~\ref{lem:trimer-third-order}. We use the reduced trimer pencil
\begin{align*}
  \mathcal L_3(\omega,\tau)
  =
  \begin{pmatrix}
    A(\omega) & -\kappa & 0\\
    0 & A(\omega) & -\kappa\\
    -\tau & 0 & A(\omega)
  \end{pmatrix},
\end{align*}
with the same dispersive scalar response $A$ used in the dimer tests. The determinant satisfies
\begin{align*}
  F(\omega,\tau)=A(\omega)^3-\kappa^2\tau.
\end{align*}
Thus, at $\tau=0$, the reduced matrix has the third-order exceptional point described in Lemma~\ref{lem:trimer-third-order}.

\begin{figure}[htbp]
  \centering
  \begin{subfigure}[t]{0.32\textwidth}
    \centering
    \includegraphics[width=\textwidth]{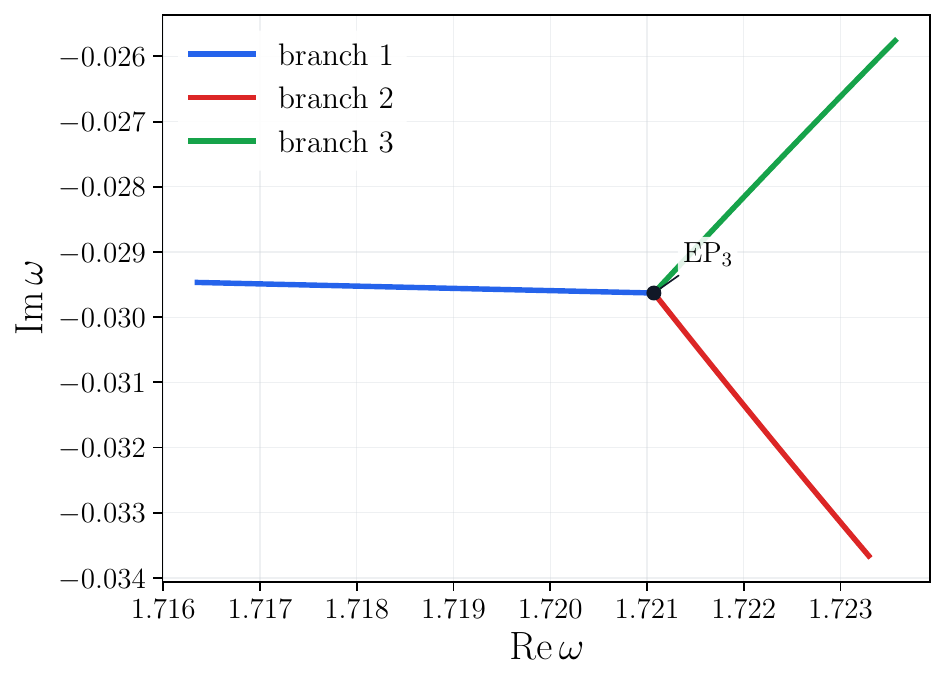}
    \caption{Three resonance branches.}
    \label{fig:trimer-trajectories}
  \end{subfigure}
  \hfill
  \begin{subfigure}[t]{0.32\textwidth}
    \centering
    \includegraphics[width=\textwidth]{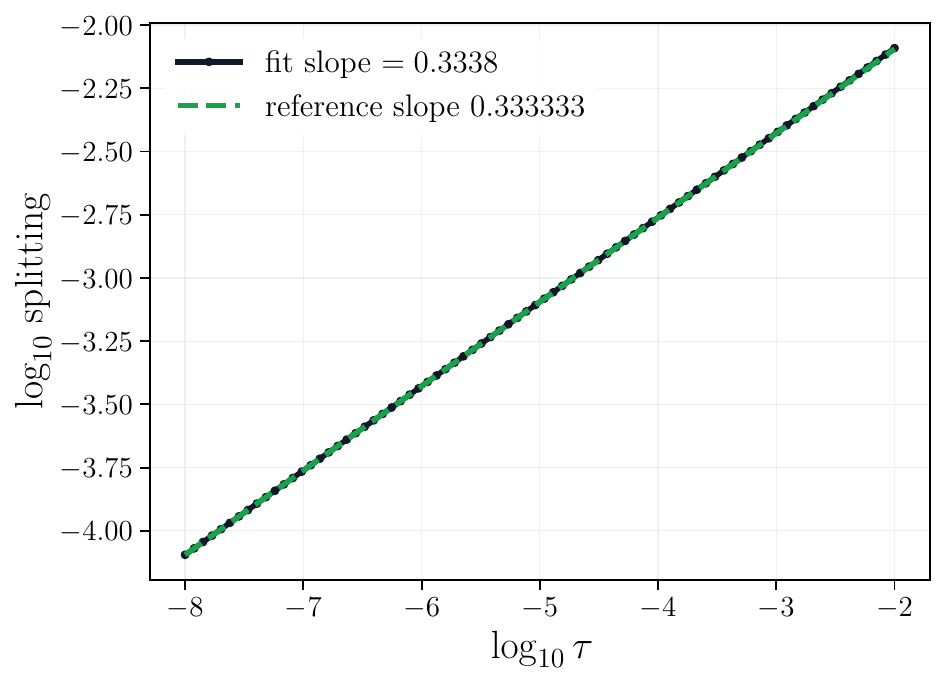}
    \caption{Cube-root splitting.}
    \label{fig:trimer-scaling}
  \end{subfigure}
  \hfill
  \begin{subfigure}[t]{0.32\textwidth}
    \centering
    \includegraphics[width=\textwidth]{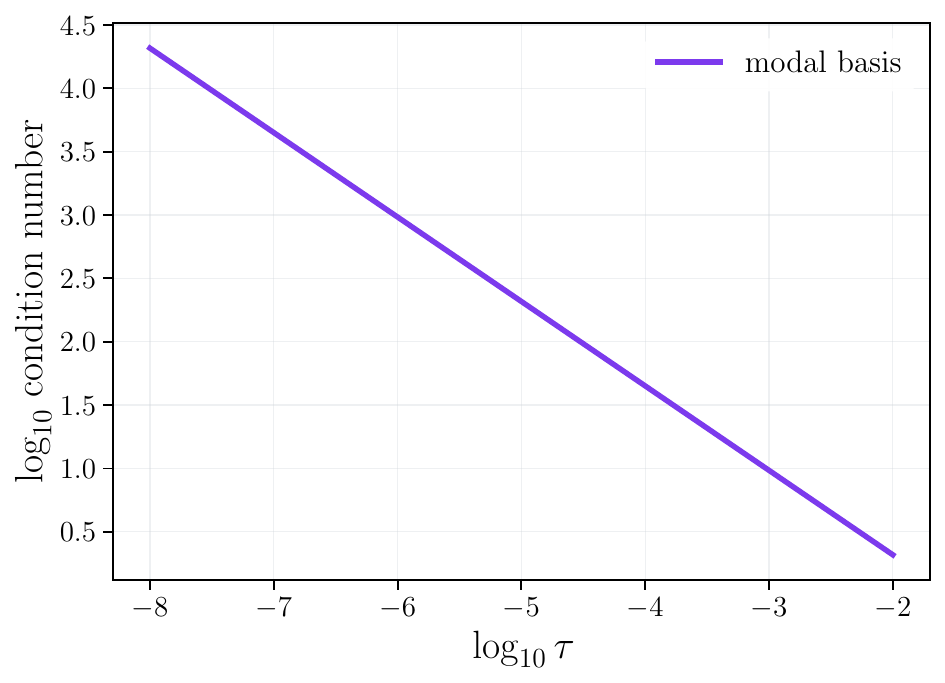}
    \caption{Modal condition number.}
    \label{fig:trimer-condition}
  \end{subfigure}
  \caption{Third-order exceptional point in the reduced trimer. The panels show the coalescence of three resonance branches, the cube-root splitting law and the growth of the modal condition number.}
  \label{fig:trimer-validation}
\end{figure}

Figure~\ref{fig:trimer-validation} shows the higher-order analogue of the dimer computations. The three branches meet at the unperturbed point and the modal basis becomes increasingly ill-conditioned as the perturbation tends to zero. The fitted exponent is close to $1/3$, in agreement with Theorem~\ref{thm:generic-puiseux} for an exceptional point of order three. This provides a numerical check of the trimer normal form from Lemma~\ref{lem:trimer-third-order}.

\subsection{Inverse design of a prescribed exceptional point}

The exceptional-point equations can also be used constructively. We prescribe a target complex frequency and solve the real and imaginary parts of $F=\partial_\omega F=0$ for material and geometric parameters. This turns the determinant criterion into an inverse-design rule for producing a reduced exceptional point at a chosen location. The computation is the numerical counterpart of Theorem~\ref{thm:transversality-inverse-design}.

\begin{figure}[htbp]
  \centering
  \begin{subfigure}[t]{0.48\textwidth}
    \centering
    \includegraphics[width=\textwidth]{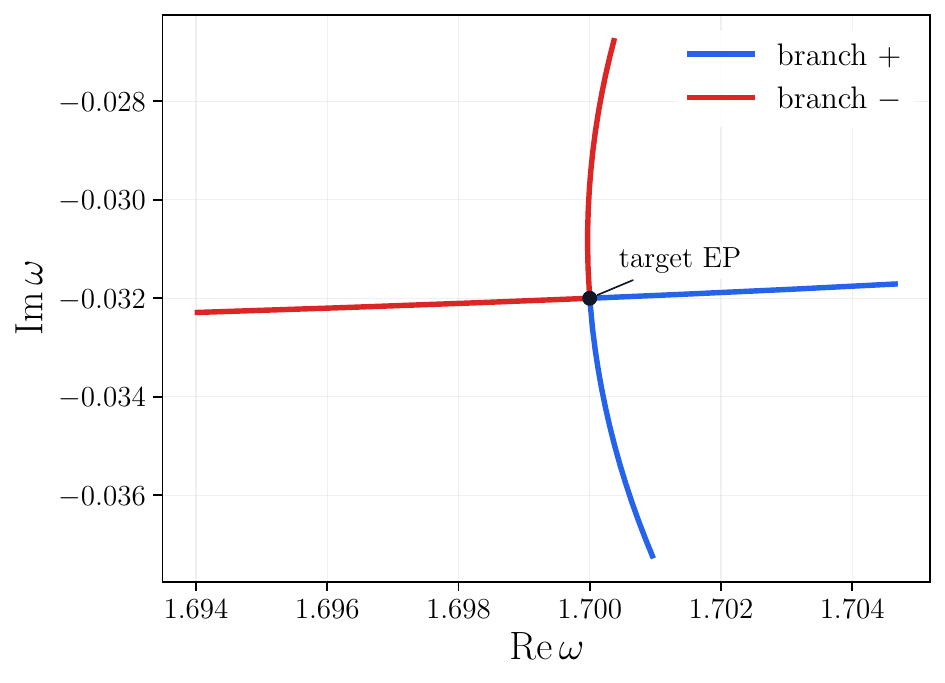}
    \caption{Targeted coalescence.}
    \label{fig:inverse-design-target}
  \end{subfigure}
  \hfill
  \begin{subfigure}[t]{0.48\textwidth}
    \centering
    \includegraphics[width=\textwidth]{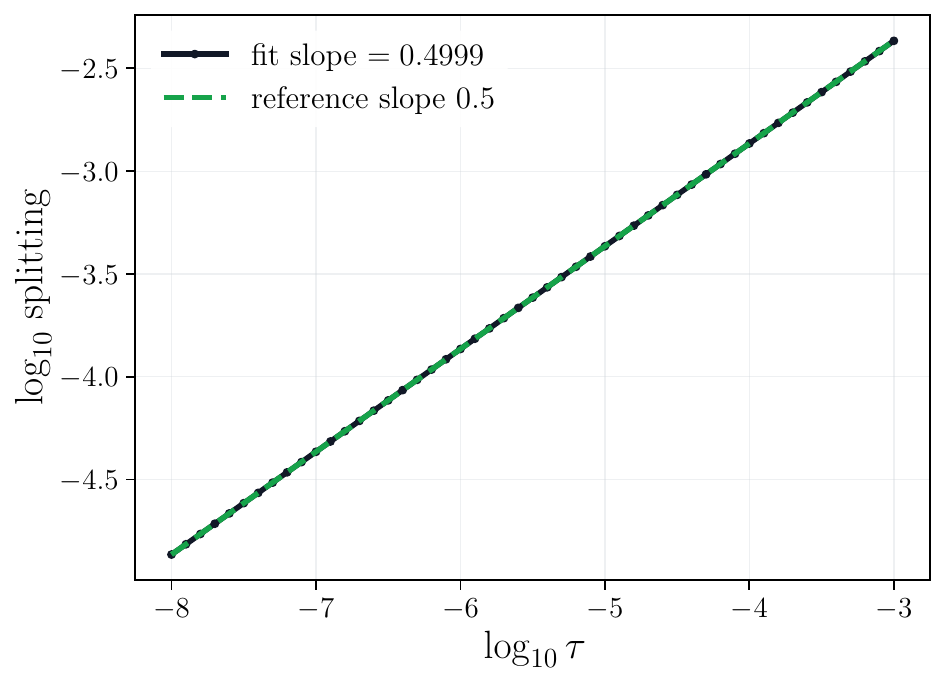}
    \caption{Designed splitting law.}
    \label{fig:inverse-design-scaling}
  \end{subfigure}
  \caption{Inverse design of an exceptional point at a prescribed target frequency. The panels show the targeted coalescence and the square-root splitting law of the designed reduced dimer.}
  \label{fig:inverse-design}
\end{figure}

Figure~\ref{fig:inverse-design} shows that the designed resonance branches coalesce at the prescribed target frequency. When one material parameter is subsequently perturbed, the splitting exponent is close to $1/2$. This is consistent with Corollary~\ref{cor:square-root-dimer} and illustrates that the equations $F=\partial_\omega F=0$ can be used not only to detect exceptional points, but also to design them within the reduced model. At the designed point, the smallest singular value of the real inverse-design Jacobian appearing in Theorem~\ref{thm:transversality-inverse-design} is approximately $4.05\cdot 10^{-3}$, which numerically verifies the transversality condition.

\subsection{Highly dispersive halide-perovskite resonators}

We finish the numerical section with a material-specific reduced model. A highly dispersive resonator is characterized by a permittivity with strong frequency dependence near a material pole. In particular, damping is represented by the imaginary part of the material denominator. Motivated by the optical response of halide perovskites \cite{sutherland2016perovskite,chouhan2020synthesis}, we consider the representative law
\begin{align}
  \varepsilon_i(\omega,k_i)
  =
  \varepsilon_0
  +
  \frac{\alpha_i}{\beta_i-\omega^2+\eta_i k_i^2-i\gamma_i\omega},
  \label{eq:hp-permittivity}
\end{align}
where $\alpha_i,\beta_i,\eta_i,\gamma_i\in\mathbb R$ and $\gamma_i>0$ corresponds to material loss. The pole structure of this function is the concrete source of the nonlinear frequency dependence in the reduced matrix, in the sense of Lemma~\ref{lem:dispersion-nonlinear-pencil}.

For the dimer experiment, this permittivity is inserted into the projected two-resonator matrix through the contrast $\xi_i(\omega,k_i)=\mu_0(\varepsilon_i(\omega,k_i)-\varepsilon_0)$. The second resonator is tuned through its material parameters, while both damping coefficients remain positive. Thus, the example uses a genuine highly dispersive passive material law rather than an imposed gain/loss parameter.

\begin{figure}[htbp]
  \centering
  \begin{subfigure}[t]{0.32\textwidth}
    \centering
    \includegraphics[width=\textwidth]{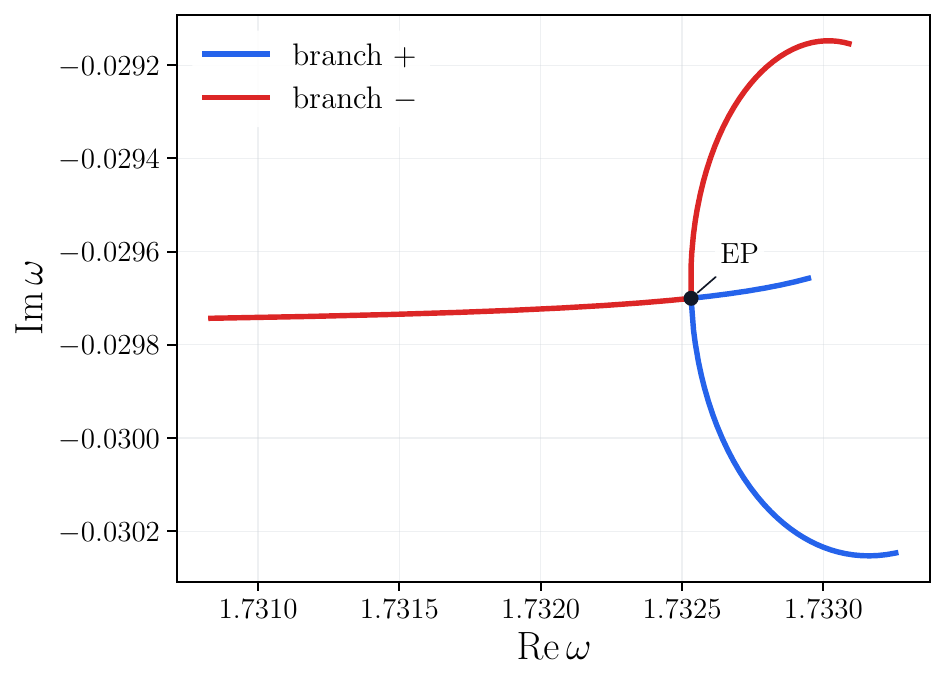}
    \caption{Resonance trajectories.}
    \label{fig:halide-perovskite-trajectories}
  \end{subfigure}
  \hfill
  \begin{subfigure}[t]{0.32\textwidth}
    \centering
    \includegraphics[width=\textwidth]{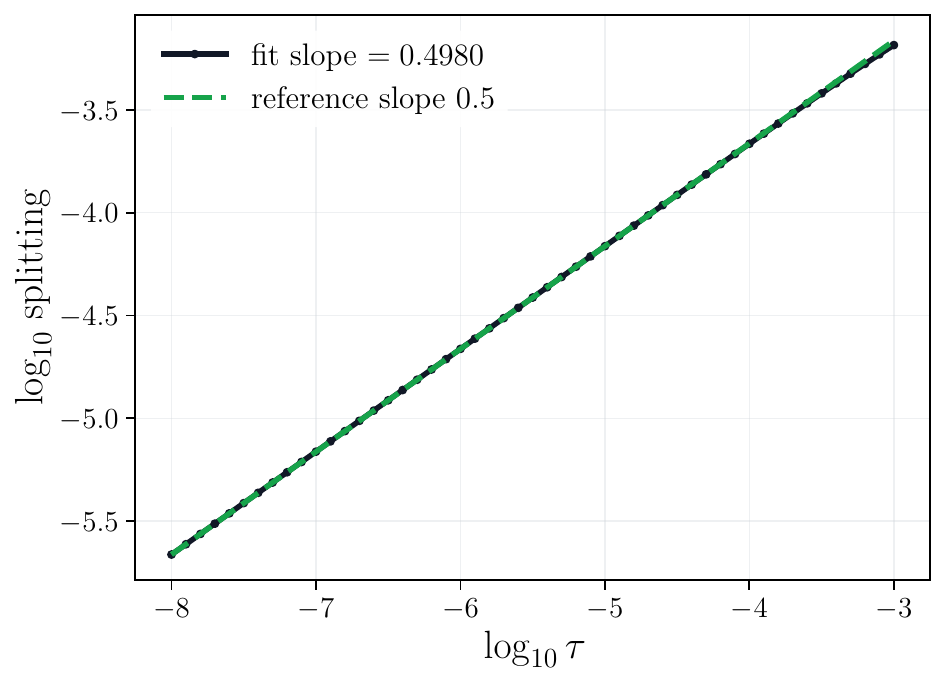}
    \caption{Square-root splitting.}
    \label{fig:halide-perovskite-scaling}
  \end{subfigure}
  \hfill
  \begin{subfigure}[t]{0.32\textwidth}
    \centering
    \includegraphics[width=\textwidth]{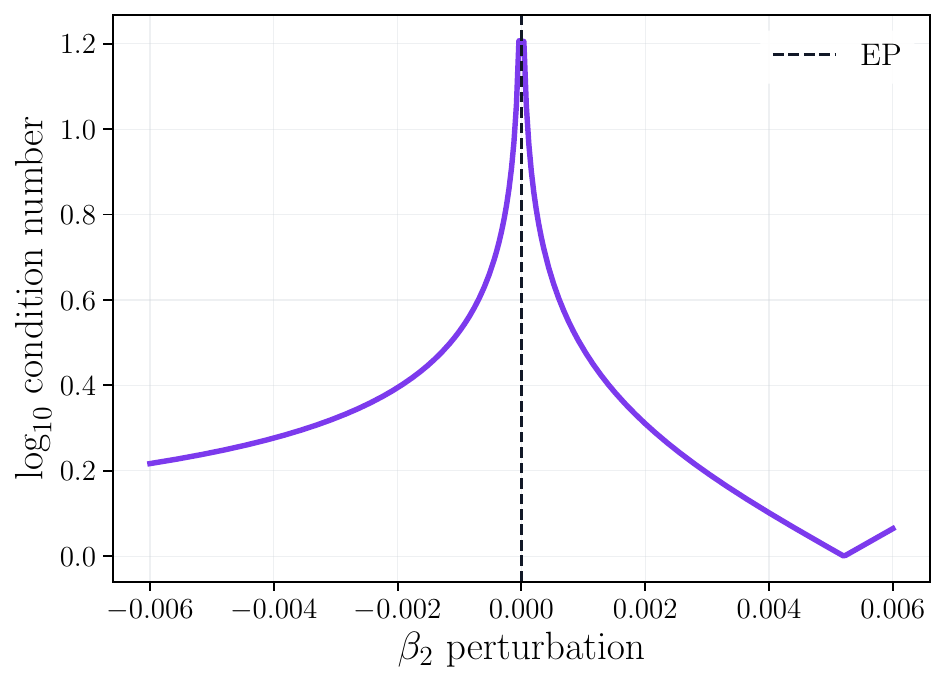}
    \caption{Modal condition number.}
    \label{fig:halide-perovskite-condition}
  \end{subfigure}
  \caption{Halide-perovskite-inspired highly dispersive passive dimer. The panels show coalescence of the resonance branches, square-root splitting under material detuning and growth of the modal condition number near the exceptional point.}
  \label{fig:halide-perovskite-dimer}
\end{figure}

Figure~\ref{fig:halide-perovskite-dimer} shows that the material-specific permittivity law produces the same exceptional-point signatures predicted by the reduced theory. The two branches coalesce at the computed point, the modal condition number grows as the perturbation approaches the exceptional point, and the measured splitting exponent is close to $1/2$. This confirms, in a model using the explicit highly dispersive permittivity, the determinant criterion and the Puiseux law of Corollary~\ref{cor:square-root-dimer}.

We also test the same material law in a full-interaction trimer. All three diagonal entries are generated by the halide-perovskite-inspired local responses, and all pairwise volume-mediated interactions are retained through the projected Green-function coupling. Thus, the trimer is not reduced to a nearest-neighbour normal form. The parameters of the second and third resonators are tuned so that the third-order determinant conditions in Section~\ref{sec:higher-order} hold, while the damping coefficients remain positive.

\begin{figure}[htbp]
  \centering
  \begin{subfigure}[t]{0.32\textwidth}
    \centering
    \includegraphics[width=\textwidth]{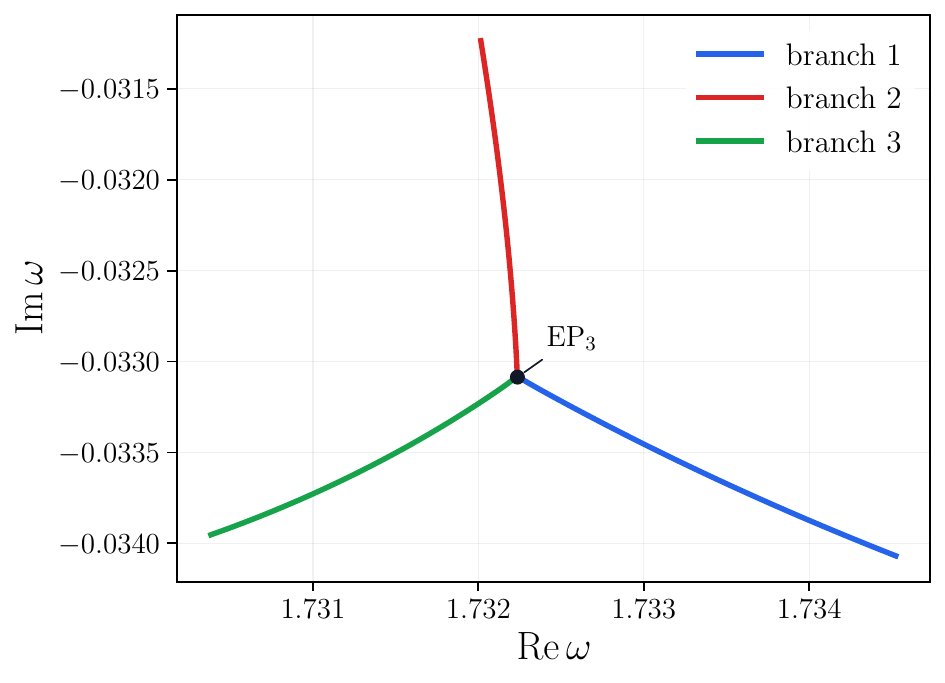}
    \caption{Three resonance branches.}
    \label{fig:halide-perovskite-trimer-trajectories}
  \end{subfigure}
  \hfill
  \begin{subfigure}[t]{0.32\textwidth}
    \centering
    \includegraphics[width=\textwidth]{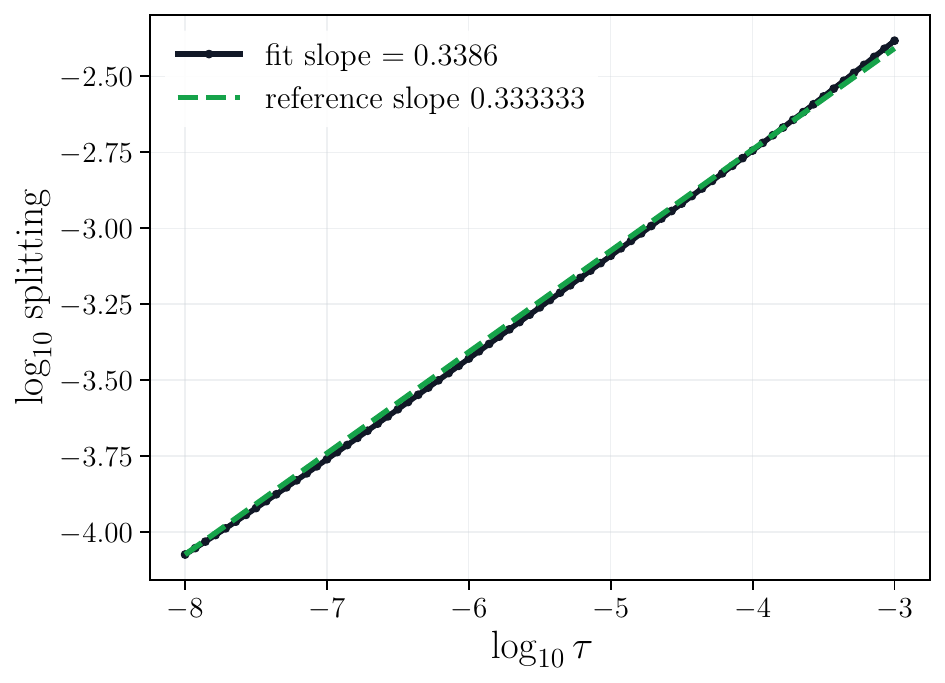}
    \caption{Cube-root splitting.}
    \label{fig:halide-perovskite-trimer-scaling}
  \end{subfigure}
  \hfill
  \begin{subfigure}[t]{0.32\textwidth}
    \centering
    \includegraphics[width=\textwidth]{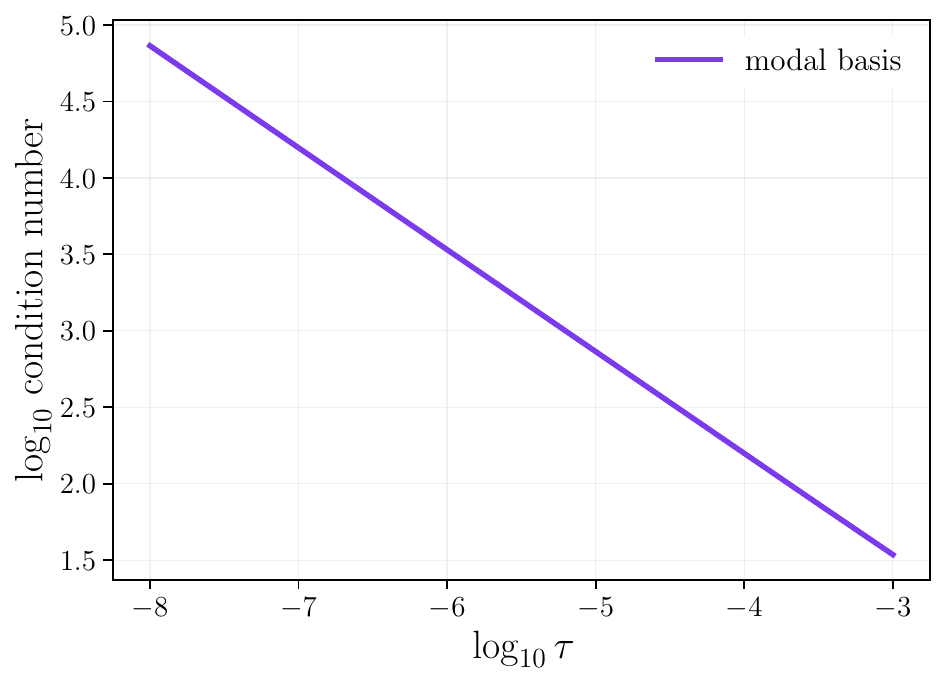}
    \caption{Modal condition number.}
    \label{fig:halide-perovskite-trimer-condition}
  \end{subfigure}
  \caption{Halide-perovskite-inspired full-interaction trimer. The panels show coalescence of three resonance branches, cube-root splitting under material detuning and growth of the modal condition number near the third-order exceptional point.}
  \label{fig:halide-perovskite-trimer}
\end{figure}

Figure~\ref{fig:halide-perovskite-trimer} shows that the higher-order exceptional-point mechanism persists when the explicit highly dispersive permittivity is used in a three-resonator full-interaction model. The three branches coalesce at the computed point and the modal basis becomes ill-conditioned as the perturbation tends to zero. The fitted exponent is close to $1/3$, in agreement with Theorem~\ref{thm:generic-puiseux} and the higher-order determinant conditions of Lemma~\ref{lem:higher-order-criterion}. Thus, the final numerical example shows that the second- and third-order exceptional-point mechanisms remain visible in a genuinely highly dispersive passive material setting.

\section{Conclusion}
\label{sec:conclusion}

In this work, we have studied exceptional points in highly dispersive subwavelength resonator systems. Starting from the Helmholtz problem, we recalled the Lippmann--Schwinger formulation and used the subwavelength reduction to obtain a nonlinear matrix pencil $\mathcal L(\omega,p)$. In this formulation, resonances are characteristic values of the pencil, and second-order exceptional points are detected by the conditions
\begin{align*}
  F(\omega,p)=0,
  \qquad
  \partial_\omega F(\omega,p)=0,
  \qquad
  F(\omega,p)=\det\mathcal L(\omega,p),
\end{align*}
together with the geometric condition that the kernel of $\mathcal L(\omega,p)$ is one-dimensional. This gives a direct connection between the original dispersive scattering problem and the finite-dimensional exceptional-point criteria.

For dimers, we derived explicit threshold conditions describing how a second-order exceptional point is produced by the balance between local detuning, damping and coupling. The balanced dimer recovers the familiar gain/loss threshold as a special case, while the passive examples indicate that this is not the only possible mechanism. In the reduced dispersive resonator systems considered here, non-Hermiticity can also be generated by material loss, detuning and frequency-dependent interactions. This gives a passive reduced-design regime, instead of requiring an idealized gain model.

The perturbation analysis explains the local sensitivity of the resonances near an exceptional point. Under a generic perturbation, an order-$m$ exceptional point splits according to a Puiseux expansion with leading order $\tau^{1/m}$. The numerical experiments are consistent with this prediction across the models considered here: the fitted exponents are close to $1/2$ for dimers and close to $1/3$ for the trimer. The explicit prefactor formula shows that the strength of the splitting depends on the full dispersive derivatives of the reduced pencil, whereas the fractional exponent is determined by the order of the degeneracy.

The higher-dimensional theory shows that the same determinant-based viewpoint extends beyond dimers. In larger resonator arrays, exceptional points are organized by algebraic conditions on $F$ and its frequency derivatives, and their realization becomes an inverse-design problem over the material and geometric parameters. The explicit trimer construction gives a reduced third-order exceptional point with geometric multiplicity one and cube-root splitting.

Several directions remain open. A natural next step is to compare the reduced predictions against the full Lippmann--Schwinger discretization for finite resonator size, in order to quantify the error between the matrix pencil and the original Helmholtz problem. Another direction is to develop systematic inverse-design algorithms for higher-order exceptional points in larger graded arrays. Finally, the sensitivity prefactors derived here suggest a route towards optimizing dispersive resonator systems for sensing, mode conversion and robust exceptional-point tracking under fabrication constraints.

\appendix

\section{Proofs of the Main Statements}
\label{app:proofs}

\subsection{Proof of Theorem~\ref{thm:lippmann-schwinger}}
\label{app:proof-ls}

\begin{proof}
This is the volume-potential formulation of the Helmholtz transmission problem. Applying the outgoing resolvent of $\Delta+\delta^2k_0^2$ to the contrast-supported source term gives the stated integral equation, and applying $\Delta+\delta^2k_0^2$ to the integral representation recovers the Helmholtz equation in the background and in each component, together with the transmission conditions and the outgoing radiation condition. This is precisely the Lippmann--Schwinger representation used for highly dispersive halide perovskite resonators in \cite{alexopoulos2022asymptotic,alexopoulos2023mathematical} and in the graded-array formulation \cite{alexopoulos2026graded}. 
\end{proof}

\subsection{Proof of Lemma~\ref{lem:dominant-profile}}
\label{app:proof-dominant-profile}

\begin{proof}
This is the constant-mode approximation for the dominant self-interaction eigenfunction. In the two-dimensional subwavelength regime, the leading logarithmic part of the volume potential is independent of the spatial variable on each component. Hence the dominant eigenfunction is approximated by the normalized constant function on that component. The estimate
\begin{align*}
  \phi_i
  =
  \widehat 1_{D_i}
  +
  O\bigl(|\log\delta|^{-1}\bigr)
\end{align*}
is the one used in the asymptotic analysis of highly dispersive halide-perovskite resonators and coupled dispersive resonator systems \cite{alexopoulos2022asymptotic,alexopoulos2023mathematical}.
\end{proof}

\subsection{Proof of Lemma~\ref{lem:projected-system}}
\label{app:proof-projected-system}

\begin{proof}
We decompose the homogeneous Lippmann--Schwinger equation into the self-interaction blocks $\mathcal K_i^{\delta k_0}$ and the cross-interaction blocks $\mathcal R_{ji}^{\delta k_0}$. Near the isolated eigenvalue $\nu_i(\delta,\omega)$, the pole-pencil decomposition gives the leading contribution of the inverse self-interaction operator in the direction $\phi_i$. Projecting the equation onto the span of $\phi_1,\ldots,\phi_N$ therefore gives a closed system for the coefficients $q_i$. The diagonal entries contain the factors
\begin{align*}
  1-\delta^2\omega^2\xi_i(\omega)\nu_i(\delta,\omega),
\end{align*}
whereas the off-diagonal entries are the projected interactions
\begin{align*}
  -\delta^2\omega^2\xi_j(\omega)
  \left\langle
    \mathcal R_{ji}^{\delta k_0}\phi_j,
    \phi_i
  \right\rangle.
\end{align*}
This is the finite-dimensional reduction derived in \cite{alexopoulos2022asymptotic,alexopoulos2023mathematical} and the component-dependent, graded version is the formulation used in \cite{alexopoulos2026graded}.
\end{proof}

\subsection{Proof of Theorem~\ref{thm:reduced-resonance-equation}}
\label{app:proof-reduced}

\begin{proof}
This is the finite-dimensional volume-potential reduction derived for coupled highly dispersive resonator systems in \cite{alexopoulos2022asymptotic,alexopoulos2023mathematical}. The graded-array version, including component-dependent geometry and material parameters, is the reduction used in \cite{alexopoulos2026graded}. The argument consists of decomposing the volume-potential equation into self-interaction and cross-interaction blocks, applying the pole-pencil decomposition near the isolated self-interaction mode on each component, and projecting onto the associated resonant vectors. The remaining part of the operator contributes the error term $\mathcal E_\delta$ in the projected equation. Since this perturbation is small uniformly on compact subsets of $W$, the leading resonance condition is $\det\mathcal L(\omega,p)=0$.
\end{proof}

\begin{remark}
    For a simple reduced resonance, the persistence and multiplicity statement follows from Rouch\'e's theorem applied to $\det(\mathcal L+\mathcal E_\delta)$ on the boundary of $B(\omega_0,r)$.
\end{remark}

\subsection{Proof of Lemma~\ref{lem:dispersion-nonlinear-pencil}}
\label{app:proof-dispersion-pencil}

\begin{proof}
As in the proof of Lemma~\ref{lem:projected-system}, the reduced matrix is obtained by projecting the homogeneous Lippmann--Schwinger equation onto the dominant profiles $\phi_i$. At leading order, its entries are of the form
\begin{align*}
  \mathcal L_{ii}(\omega,p)
  =
  1-\delta^2\omega^2\xi_i(\omega,p)
  \left\langle
    \mathcal K_i^{\delta k_0}\phi_i,
    \phi_i
  \right\rangle,
\end{align*}
and, for $i\neq j$,
\begin{align*}
  \mathcal L_{ij}(\omega,p)
  =
  -\delta^2\omega^2\xi_j(\omega,p)
  \left\langle
    \mathcal R_{ji}^{\delta k_0}\phi_j,
    \phi_i
  \right\rangle,
\end{align*}
up to the same normalization conventions used in the reduced graded-array matrix. Thus, the material contrast appears explicitly in the diagonal self-interaction term and in the off-diagonal source interaction terms. The entries are finite sums and products of projected self-interaction coefficients, projected cross-interaction coefficients and material contrasts. Therefore, away from material poles, the holomorphy of these factors implies the holomorphy of $\mathcal L(\cdot,p)$. 
\end{proof}

\subsection{Proof of Lemma~\ref{lem:determinant-criterion}}
\label{app:proof-det-criterion}

\begin{proof}
The order of the zero of $F(\cdot,p_*)$ is, by definition, the algebraic multiplicity of the characteristic value $\omega_*$. The dimension of $\ker\mathcal L(\omega_*,p_*)$ is the geometric multiplicity. Hence the assumptions give
\begin{align*}
  \alg_{\omega_*}\mathcal L(\cdot,p_*)=m,
  \qquad
  \geo_{\omega_*}\mathcal L(\cdot,p_*)=1,
\end{align*}
which is precisely the definition of an exceptional point of order $m$.
\end{proof}

\subsection{Proof of Lemma~\ref{lem:frozen-material}}
\label{app:proof-frozen-material}

\begin{proof}
For any differentiable square matrix $M(t)$, Jacobi's formula gives
\begin{align*}
  \frac{d}{dt}\det M(t)
  =
  \tr\bigl(
    \adj M(t)M'(t)
  \bigr).
\end{align*}
Applying this formula to $M(\omega)=\mathcal L(\omega,p_*)$ and to $M(\omega)=\mathcal L_{\mathrm{fr}}(\omega,p_*)$ gives
\begin{align*}
  \partial_\omega F(\omega_*,p_*)
  =
  \tr\bigl(
    \adj\mathcal L(\omega_*,p_*)
    \partial_\omega\mathcal L(\omega_*,p_*)
  \bigr),
\end{align*}
and the analogous identity for $F_{\mathrm{fr}}$. Since $\mathcal L_{\mathrm{fr}}(\omega_*,p_*)=\mathcal L(\omega_*,p_*)$, their adjugate matrices agree at $(\omega_*,p_*)$. Subtracting the two identities gives the stated formula. In the frozen-material pencil the material contrasts are independent of $\omega$ at the level of the material law, so the terms containing $\partial_\omega\xi_i(\omega_*,p_*)$ are absent.
\end{proof}

\subsection{Proof of Corollary~\ref{cor:exceptional-cluster}}
\label{app:proof-cluster}

\begin{proof}
Since $F(\cdot,p_*)$ has a zero of order $m$ at $\omega_*$ and no other zeros in $\overline{B(\omega_*,r)}$, one has
\begin{align*}
  \min_{\omega\in\partial B(\omega_*,r)}
  |F(\omega,p_*)|>0.
\end{align*}
The estimate in Theorem~\ref{thm:reduced-resonance-equation} gives $\|\mathcal E_\delta(\omega,p_*)\|\to0$ uniformly on the boundary. Hence $F_\delta(\cdot,p_*)$ converges uniformly to $F(\cdot,p_*)$ on $\partial B(\omega_*,r)$. Rouch\'e's theorem therefore implies that $F_\delta$ and $F$ have the same number of zeros in the disc, counted with multiplicity. Since $r$ can be chosen arbitrarily small inside $W$, the zeros of $F_\delta$ converge to $\omega_*$ as $\delta\to0$.
\end{proof}

\subsection{Proof of Theorem~\ref{thm:transversality-inverse-design}}
\label{app:proof-transversality}

\begin{proof}
The exceptional-point equations for a second-order degeneracy are the four real equations
\begin{align*}
  \mathcal T(\omega,a,b)=0.
\end{align*}
By assumption, the Jacobian of $\mathcal T$ with respect to $(\operatorname{Re}\omega,\operatorname{Im}\omega,a)$ is invertible at $(\omega_*,a_*,b_*)$. The implicit function theorem therefore gives unique functions $\omega(b)$ and $a(b)$, defined for $b$ near $b_*$, such that
\begin{align*}
  \mathcal T(\omega(b),a(b),b)=0.
\end{align*}
This is equivalent to $F(\omega(b),a(b),b)=\partial_\omega F(\omega(b),a(b),b)=0$. The conditions $\partial_\omega^2F\neq0$ and $\rank\mathcal L=N-1$ are open under small perturbations along the zero set: the first by continuity, and the second because some $(N-1)\times(N-1)$ minor is nonzero at the original point. Since $F=0$ along the branch, the rank is exactly $N-1$ nearby, and hence the kernel is one-dimensional. Thus, the nearby solutions remain second-order exceptional points.
\end{proof}

\subsection{Proof of Lemma~\ref{lem:dimer-criterion}}
\label{app:proof-dimer-criterion}

\begin{proof}
The first identity is $F(\omega_*,p_*)=0$ and the second identity is $\partial_\omega F(\omega_*,p_*)=0$. The condition $\partial_\omega^2F(\omega_*,p_*)\neq0$ says that this zero has order exactly two. The result therefore follows from Lemma~\ref{lem:determinant-criterion}.
\end{proof}

\subsection{Proof of Lemma~\ref{lem:balanced-dimer}}
\label{app:proof-balanced-dimer}

\begin{proof}
For the balanced dimer, $F=A^2-\Delta^2-\kappa^2$. The first two displayed conditions are $F(\omega_*,p_*)=0$ and $\partial_\omega F(\omega_*,p_*)=0$, up to the common factor $2$ in the first derivative. The third condition says that the zero has order exactly two. Since $\mathcal L(\omega_*,p_*)$ is a nonzero singular $2\times2$ matrix, its kernel is one-dimensional. Lemma~\ref{lem:determinant-criterion} therefore gives the claim.
\end{proof}

\subsection{Proof of Corollary~\ref{cor:pt-threshold}}
\label{app:proof-pt-threshold}

\begin{proof}
With $\Delta_0=i\Gamma$,
\begin{align*}
  F(\omega)
  =
  A(\omega)^2+\Gamma^2-\kappa_0^2.
\end{align*}
Thus, $F(\omega_*)=0$ and $F'(\omega_*)=0$ hold exactly when $A(\omega_*)=0$ and $\Gamma^2=\kappa_0^2$. Moreover,
\begin{align*}
  F''(\omega_*)
  =
  2A'(\omega_*)^2,
\end{align*}
which is nonzero by assumption. At the threshold $\Gamma^2=\kappa_0^2$, the matrix is singular but nonzero, and therefore has a one-dimensional kernel.
\end{proof}

\subsection{Proof of Theorem~\ref{thm:generic-puiseux}}
\label{app:proof-puiseux}

\begin{proof}
The exceptional-point assumptions imply
\begin{align*}
  \Phi(\omega,0)
  =
  a_m(\omega-\omega_*)^m
  +
  O\bigl((\omega-\omega_*)^{m+1}\bigr),
  \qquad
  a_m\neq0.
\end{align*}
The genericity condition gives
\begin{align*}
  \Phi(\omega_*,\tau)
  =
  b\tau+O(\tau^2),
  \qquad
  b\neq0.
\end{align*}
We write
\begin{align*}
  \omega=\omega_*+\tau^{1/m}z.
\end{align*}
Using the Taylor expansion of $\Phi$ near $(\omega_*,0)$, we obtain
\begin{align*}
  \Phi(\omega_*+\tau^{1/m}z,\tau)
  =
  \tau
  \left(
    a_m z^m+b+O(\tau^{1/m})
  \right),
\end{align*}
uniformly for $z$ in bounded sets. The roots of the limiting polynomial $a_m z^m+b$ are simple because $a_m\neq0$ and $b\neq0$. Hence, by Rouch\'e's theorem, or equivalently by the Weierstrass preparation theorem, the zeros of $\Phi(\cdot,\tau)$ near $\omega_*$ are obtained from perturbations of these $m$ roots. This gives
\begin{align*}
  z_j(\tau)
  =
  c_j+O(\tau^{1/m}),
\end{align*}
and therefore
\begin{align*}
  \omega_j(\tau)
  =
  \omega_*
  +
  c_j\tau^{1/m}
  +
  O(\tau^{2/m}).
\end{align*}
\end{proof}

\subsection{Proof of Corollary~\ref{cor:square-root-dimer}}
\label{app:proof-square-root}

\begin{proof}
This is Theorem~\ref{thm:generic-puiseux} with $m=2$ and
\begin{align*}
  a_2
  =
  \frac{1}{2}\partial_\omega^2F(\omega_*,p_*).
\end{align*}
\end{proof}

\subsection{Proof of Lemma~\ref{lem:prefactor-material-derivatives}}
\label{app:proof-prefactor-material}

\begin{proof}
By Corollary~\ref{cor:square-root-dimer}, the square of the leading coefficient is
\begin{align*}
  \mathfrak c^2
  =
  -
  \frac{2b}{\partial_\omega^2F(\omega_*,p_*)},
\end{align*}
where
\begin{align*}
  b
  =
  \left.
  \frac{d}{d\tau}
  F(\omega_*,p_*+\tau h)
  \right|_{\tau=0}.
\end{align*}
Jacobi's formula applied to the parameter derivative gives
\begin{align*}
  b
  =
  \tr\bigl(
    \adj\mathcal L(\omega_*,p_*)
    D_p\mathcal L(\omega_*,p_*)[h]
  \bigr).
\end{align*}
This proves the displayed formula for $\mathfrak c^2$. Applying Jacobi's formula to the frequency derivative gives the identity for $\partial_\omega F$. Differentiating it once more shows that $\partial_\omega^2F(\omega_*,p_*)$ contains the derivatives of $\partial_\omega\mathcal L$, and therefore the terms $\partial_\omega\xi_i$ and $\partial_\omega^2\xi_i$ whenever the material contrasts enter the reduced matrix.
\end{proof}

\subsection{Proof of Lemma~\ref{lem:higher-order-criterion}}
\label{app:proof-higher-order}

\begin{proof}
This is Lemma~\ref{lem:determinant-criterion} applied to the $N$-dimensional reduced matrix. The order of the zero of $F$ is the algebraic multiplicity, while the kernel dimension is the geometric multiplicity.
\end{proof}

\subsection{Proof of Lemma~\ref{lem:trimer-third-order}}
\label{app:proof-trimer}

\begin{proof}
The determinant is computed directly by expanding along the first row:
\begin{align*}
  \det\mathcal L_3(\omega,\tau)
  =
  A(\omega)^3-\kappa^2\tau.
\end{align*}
At $\tau=0$, the assumptions $A(\omega_*)=0$ and $A'(\omega_*)\neq0$ imply that $A(\omega)^3$ has a zero of order three at $\omega_*$. This gives the displayed derivative identities. Finally,
\begin{align*}
  \mathcal L_3(\omega_*,0)
  =
  \begin{pmatrix}
    0 & -\kappa & 0\\
    0 & 0 & -\kappa\\
    0 & 0 & 0
  \end{pmatrix},
\end{align*}
whose kernel is spanned by $(1,0,0)^\top$. Hence the algebraic multiplicity is three and the geometric multiplicity is one. The claim follows from Lemma~\ref{lem:higher-order-criterion}.
\end{proof}

\section{Numerical Reproducibility}
\label{app:numerical-reproducibility}

In this appendix we record the reduced models and numerical procedures used in Section~\ref{sec:numerics}. All experiments are performed on the reduced determinant
\begin{align*}
  F(\omega,p)=\det\mathcal L(\omega,p).
\end{align*}
Second-order exceptional points are computed by solving the four real equations obtained from
\begin{align*}
  F(\omega,p)=0,
  \qquad
  \partial_\omega F(\omega,p)=0.
\end{align*}
The complex frequency is written as $\omega=\omega_r+i\omega_i$, and the remaining real unknowns are chosen according to the experiment. The nonlinear systems are solved by Newton iteration with finite-difference Jacobians. After convergence, we check the residuals, compute the next nonzero frequency derivative of $F$, and evaluate the rank or modal conditioning of $\mathcal L$ near the computed point.

\subsection{Reduced local response and interaction}

The numerical experiments use the reduced determinant associated with the matrix $\mathcal L(\omega,p)$ introduced in the main text. For the halide-perovskite-inspired examples, the material law is the one given in \eqref{eq:hp-permittivity}, inserted through the contrast $\xi_i=\mu_0(\varepsilon_i-\varepsilon_0)$. The local response and the volume-mediated interaction are obtained from the projected self- and cross-interaction coefficients described in Section~\ref{sec:reduced-problem}.

For the model dimers and trimers which are not tied to a specific material law, we use scalar reduced entries chosen to have the same analytic structure as the projected dispersive response. Their exact numerical values are only used to generate the figures in Section~\ref{sec:numerics}; the theoretical claims depend on the determinant conditions and not on these particular constants. In all cases, the frequency-dependent coupling represents the projected Green-function interaction between distinct resonators, with the subwavelength scale and resonator separation fixed during each continuation.

\subsection{Dimer computations}

For the frequency-dependent volume-interaction dimer, the reduced matrix is
\begin{align*}
  \mathcal L(\omega,\Gamma)
  =
  \begin{pmatrix}
    A(\omega)+i\Gamma & -\kappa(\omega)\\
    -\kappa(\omega) & A(\omega)-i\Gamma
  \end{pmatrix},
\end{align*}
where $A$ is the scalar local response and $\kappa$ is the projected interaction coefficient. The frequency $\omega$ and the detuning parameter $\Gamma$ are found from $F=\partial_\omega F=0$. The splitting plot is obtained by replacing $\Gamma_*$ with $\Gamma_*(1+\tau)$ and tracking the two nearby zeros of $F$.

For the passive damped dimer, the unknowns are the complex frequency and two real material parameters of the second resonator, while the damping coefficients are kept positive. The reduced matrix is
\begin{align*}
  \mathcal L(\omega)
  =
  \begin{pmatrix}
    A_1(\omega) & -\kappa(\omega)\\
    -\kappa(\omega) & A_2(\omega)
  \end{pmatrix}.
\end{align*}
The perturbation used to measure the square-root splitting is $\beta_2=\beta_2^*+\tau$, with $\gamma_2=\gamma_2^*$ fixed. The reduced modal condition number is computed from the two null vectors associated with the two nearby resonances.

\subsection{Trimer computation}

The reduced third-order trimer computation uses the normal form
\begin{align*}
  \mathcal L_3(\omega,\tau)
  =
  \begin{pmatrix}
    A(\omega) & -\kappa & 0\\
    0 & A(\omega) & -\kappa\\
    -\tau & 0 & A(\omega)
  \end{pmatrix}.
\end{align*}
The determinant is
\begin{align*}
  F(\omega,\tau)=A(\omega)^3-\kappa^2\tau.
\end{align*}
For each $\tau>0$, the three branches are computed by solving $A(\omega)^3=\kappa^2\tau$. The fitted exponent is obtained by a least-squares fit of the logarithm of the maximum pairwise resonance splitting against $\log\tau$.

\subsection{Inverse design}

For inverse design, the target frequency is
\begin{align*}
  \omega_{\mathrm{tar}}=1.7-0.032i.
\end{align*}
The unknowns are $\beta_2$, $\gamma_2$, the coupling scale $s$ and the separation $R$. These four real parameters are chosen so that $F(\omega_{\mathrm{tar}},p)=0$ and $\partial_\omega F(\omega_{\mathrm{tar}},p)=0$. At the computed point, the smallest singular value of the real Jacobian in Theorem~\ref{thm:transversality-inverse-design} is $4.05\cdot10^{-3}$, confirming that the local inverse-design map is non-degenerate in the numerical example.

\subsection{Halide-perovskite computations}

The halide-perovskite dimer and trimer use the material law \eqref{eq:hp-permittivity}. In the dimer, the second resonator is tuned through two real material parameters while both damping coefficients remain positive. The exceptional point is computed from $F=\partial_\omega F=0$, and the square-root splitting is measured by perturbing one of the material parameters.

In the trimer, three distinct resonators are placed at non-collinear positions and all pairwise projected interactions are retained. The complex frequency and four real material parameters of the second and third resonators are chosen so that
\begin{align*}
  F(\omega,p)=0,
  \qquad
  \partial_\omega F(\omega,p)=0,
  \qquad
  \partial_\omega^2F(\omega,p)=0.
\end{align*}
The cube-root splitting is measured by perturbing one material parameter of the third resonator, with all remaining material parameters fixed.

\subsection{Numerical summary}

Table~\ref{tab:numerical-summary} records the diagnostic quantities used to support the numerical claims in Section~\ref{sec:numerics}. The residual column reports the size of the solved exceptional-point equations, while the last column records the fitted Puiseux exponent obtained from the corresponding splitting plot.

\begin{table}[htbp]
  \centering
  \begin{tabular}{lccc}
    \toprule
    Model & Residual & Nonzero derivative check & Fitted exponent \\
    \midrule
    Volume-interaction dimer
      & $5.2\cdot10^{-15}$
      & double zero
      & $0.5001$ \\
    Passive damped dimer
      & $1.4\cdot10^{-13}$
      & $|\partial_\omega^2F|\approx 40.18$
      & $0.4980$ \\
    Trimer
      & $1.1\cdot10^{-47}$
      & $\partial_\omega^3F\neq0$
      & $0.3338$ \\
    Inverse-designed dimer
      & $1.2\cdot10^{-12}$
      & $|\partial_\omega^2F|\approx 41.33$, $\sigma_{\min}\approx 4.05\cdot10^{-3}$
      & $0.4999$ \\
    Halide-perovskite-inspired dimer
      & $2.6\cdot10^{-13}$
      & $|\partial_\omega^2F|\approx 39.63$
      & $0.4980$ \\
    Halide-perovskite-inspired trimer
      & $2.8\cdot10^{-10}$
      & $|\partial_\omega^3F|\approx 529.54$
      & $0.3386$ \\
    \bottomrule
  \end{tabular}
  \caption{Numerical diagnostics for the reduced exceptional-point computations.}
  \label{tab:numerical-summary}
\end{table}

\bibliographystyle{abbrv}
\bibliography{references}{}

\begin{thebibliography}{10}

\bibitem{alexopoulos2026graded}
K.~Alexopoulos.
\newblock {Frequency sorting in graded dispersive resonator arrays via
  volume-interaction matrices}.
\newblock preprint, hal-05743831, 2026.

\bibitem{alexopoulos2022asymptotic}
K.~Alexopoulos and B.~Davies.
\newblock Asymptotic analysis of subwavelength halide perovskite resonators.
\newblock {\em Partial Differential Equations and Applications}, 3(4):44, 2022.

\bibitem{alexopoulos2023mathematical}
K.~Alexopoulos and B.~Davies.
\newblock A mathematical design strategy for highly dispersive resonator
  systems.
\newblock {\em Mathematical Methods in the Applied Sciences},
  46(14):15883--15908, 2023.

\bibitem{alexopoulos2026non}
K.~Alexopoulos, B.~Davies, and P.~Millien.
\newblock Non-singular hotspots between closely spaced high-index
  nanoparticles.
\newblock {\em arXiv preprint arXiv:2607.24204}, 2026.

\bibitem{ammari2019fully}
H.~Ammari and B.~Davies.
\newblock A fully coupled subwavelength resonance approach to filtering
  auditory signals.
\newblock {\em Proceedings: Mathematical, Physical and Engineering Sciences},
  475(2228):1--17, 2019.

\bibitem{ammari2021high}
H.~Ammari, B.~Davies, E.~O. Hiltunen, H.~Lee, and S.~Yu.
\newblock High-order exceptional points and enhanced sensing in subwavelength
  resonator arrays.
\newblock {\em Studies in Applied Mathematics}, 146(2):440--462, 2021.

\bibitem{ammari2022exceptional}
H.~Ammari, B.~Davies, E.~O. Hiltunen, H.~Lee, and S.~Yu.
\newblock Exceptional points in parity--time-symmetric subwavelength
  metamaterials.
\newblock {\em SIAM Journal on Mathematical Analysis}, 54(6):6223--6253, 2022.

\bibitem{ammari2022wave}
H.~Ammari, B.~Davies, E.~O. Hiltunen, H.~Lee, and S.~Yu.
\newblock Wave interaction with subwavelength resonators.
\newblock In {\em Applied Mathematical Problems in Geophysics: Cetraro, Italy
  2019}, pages 23--83. Springer, 2022.

\bibitem{ammari2024functional}
H.~Ammari, B.~Davies, and E.~Orvehed~Hiltunen.
\newblock Functional analytic methods for discrete approximations of
  subwavelength resonator systems.
\newblock {\em Pure and Applied Analysis}, 6(3):873--939, 2024.

\bibitem{ammari2020close}
H.~Ammari, B.~Davies, and S.~Yu.
\newblock Close-to-touching acoustic subwavelength resonators: eigenfrequency
  separation and gradient blow-up.
\newblock {\em Multiscale Modeling \& Simulation}, 18(3):1299--1317, 2020.

\bibitem{ammari2018mathematical}
H.~Ammari, B.~Fitzpatrick, H.~Kang, M.~Ruiz, S.~Yu, and H.~Zhang.
\newblock {\em Mathematical and computational methods in photonics and
  phononics}, volume 235.
\newblock American Mathematical Soc., 2018.

\bibitem{ammari2009layer}
H.~Ammari, H.~Kang, and H.~Lee.
\newblock {\em Layer potential techniques in spectral analysis}.
\newblock Number 153. American Mathematical Soc., 2009.

\bibitem{bender1998real}
C.~M. Bender and S.~Boettcher.
\newblock Real spectra in non-hermitian hamiltonians having p t symmetry.
\newblock {\em Physical review letters}, 80(24):5243, 1998.

\bibitem{chen2017exceptional}
W.~Chen, {\c{S}}.~Kaya~{\"O}zdemir, G.~Zhao, J.~Wiersig, and L.~Yang.
\newblock Exceptional points enhance sensing in an optical microcavity.
\newblock {\em Nature}, 548(7666):192--196, 2017.

\bibitem{chouhan2020synthesis}
L.~Chouhan, S.~Ghimire, C.~Subrahmanyam, T.~Miyasaka, and V.~Biju.
\newblock Synthesis, optoelectronic properties and applications of halide
  perovskites.
\newblock {\em Chemical Society Reviews}, 49(10):2869--2885, 2020.

\bibitem{el2018non}
R.~El-Ganainy, K.~G. Makris, M.~Khajavikhan, Z.~H. Musslimani, S.~Rotter, and
  D.~N. Christodoulides.
\newblock Non-hermitian physics and pt symmetry.
\newblock {\em Nature Physics}, 14(1):11--19, 2018.

\bibitem{gohberg1971operator}
I.~C. Gohberg and E.~I. Sigal.
\newblock An operator generalization of the logarithmic residue theorem and the
  theorem of rouch{\'e}.
\newblock {\em Mathematics of the USSR-Sbornik}, 13(4):603--625, 1971.

\bibitem{heiss2012physics}
W.~D. Heiss.
\newblock The physics of exceptional points.
\newblock {\em Journal of Physics A: Mathematical and Theoretical},
  45(44):444016, 2012.

\bibitem{hodaei2017enhanced}
H.~Hodaei, A.~U. Hassan, S.~Wittek, H.~Garcia-Gracia, R.~El-Ganainy, D.~N.
  Christodoulides, and M.~Khajavikhan.
\newblock Enhanced sensitivity at higher-order exceptional points.
\newblock {\em Nature}, 548(7666):187--191, 2017.

\bibitem{kato1966perturbation}
T.~Kato.
\newblock {\em Perturbation theory for linear operators}, volume 132.
\newblock Springer, 1966.

\bibitem{lau2018fundamental}
H.-K. Lau and A.~A. Clerk.
\newblock Fundamental limits and non-reciprocal approaches in non-hermitian
  quantum sensing.
\newblock {\em Nature communications}, 9(1):4320, 2018.

\bibitem{li2023exceptional}
A.~Li, H.~Wei, M.~Cotrufo, W.~Chen, S.~Mann, X.~Ni, B.~Xu, J.~Chen, J.~Wang,
  S.~Fan, et~al.
\newblock Exceptional points and non-hermitian photonics at the nanoscale.
\newblock {\em Nature Nanotechnology}, 18(7):706--720, 2023.

\bibitem{miri2019exceptional}
M.-A. Miri and A.~Alu.
\newblock Exceptional points in optics and photonics.
\newblock {\em Science}, 363(6422):eaar7709, 2019.

\bibitem{moiseyev2011non}
N.~Moiseyev.
\newblock {\em Non-Hermitian quantum mechanics}.
\newblock Cambridge University Press, 2011.

\bibitem{park2020symmetry}
J.-H. Park, A.~Ndao, W.~Cai, L.~Hsu, A.~Kodigala, T.~Lepetit, Y.-H. Lo, and
  B.~Kant{\'e}.
\newblock Symmetry-breaking-induced plasmonic exceptional points and nanoscale
  sensing.
\newblock {\em Nature Physics}, 16(4):462--468, 2020.

\bibitem{sutherland2016perovskite}
B.~R. Sutherland and E.~H. Sargent.
\newblock Perovskite photonic sources.
\newblock {\em Nature Photonics}, 10(5):295--302, 2016.

\bibitem{wiersig2014enhancing}
J.~Wiersig.
\newblock Enhancing the sensitivity of frequency and energy splitting detection
  by using exceptional points: application to microcavity sensors for
  single-particle detection.
\newblock {\em Physical review letters}, 112(20):203901, 2014.

\bibitem{wiersig2020prospects}
J.~Wiersig.
\newblock Prospects and fundamental limits in exceptional point-based sensing.
\newblock {\em Nature communications}, 11(1):2454, 2020.

\bibitem{wiersig2020review}
J.~Wiersig.
\newblock Review of exceptional point-based sensors.
\newblock {\em Photonics research}, 8(9):1457--1467, 2020.

\end{thebibliography}

\end{document}